\documentclass[sn-mathphys,Numbered]{sn-jnl}
\usepackage{amsmath,amssymb,amsfonts}
\usepackage{geometry}
\usepackage{graphicx, epstopdf}
\usepackage{tensor}
\usepackage{hyperref}
\usepackage{setspace}
\usepackage{ragged2e} 
\usepackage{amsmath,amssymb,amsthm}
\usepackage{placeins}
\usepackage{url}  
\usepackage[numbers]{natbib}
\usepackage[utf8]{inputenc}
\usepackage{graphicx}%
\usepackage{multirow}%
\usepackage{xcolor}
\usepackage{amsthm}%
\usepackage{mathrsfs}%
\usepackage[title]{appendix}%
\usepackage{xcolor}%
\usepackage{textcomp}%
\usepackage{manyfoot}%
\usepackage{booktabs}%
\usepackage{algorithm}%
\usepackage{algorithmicx}%
\usepackage{algpseudocode}%
\usepackage{listings}%
\usepackage{csquotes}
\usepackage{microtype}

\newtheorem{theorem}{Theorem}
\newtheorem{proposition}{Proposition}
\newtheorem{corollary}{Corollary}
\theoremstyle{definition}
\newtheorem{definition}{Definition}

\numberwithin{equation}{section}

\date{}

\begin{document}
	\title[Article Title]{Traversable Wormhole De-singularization: Almost $\eta$-Ricci-Yamabe Solitons in Static Spherically Symmetric Imperfect Fluid Spacetimes}
	\author[1]{\fnm{Jay Prakash} \sur{Singh}}\email{jpsmaths@cusb.ac.in}	
	\equalcont{These authors contributed equally to this work.}
	\author[2]{\fnm{Jaswant}}\email{jaswantj385@gmail.com}
	\equalcont{These authors contributed equally to this work.}
	\affil[1,2]{\orgdiv{Department of Mathematics}, \orgname{Central University of South Bihar}, \orgaddress{ \city{Gaya}, \postcode{824236}, \state{Bihar}, \country{India}}}
	
	\justifying
	\abstract{In this paper, we investigate the almost $\eta$-Ricci-Yamabe soliton as a fundamental geometric regulator for a static, spherically symmetric black hole coupled to an imperfect fluid. We have shown that the scaling parameter $\omega(r)$ is governed by thermodynamic friction along the radial vector field, and the geometric coupling with the Hawking temperature: $\alpha(r_H) S_{tt} = 2\pi T_H$, at the horizon. We also derive the Poisson equation along the gradient vector field of the soliton and prove that the flow's kinematic expansion is explicitly dependent on the fluid's equation of state $\rho = \gamma \sigma$. Diverging from traditional methodologies that assume a geometric shape function apriori, we analytically proved the geometric flow endogenously transitions the black hole geometry into a traversable wormhole throat, by regularizing of temporal coordinate and satisfying spatial flare-out condition. This transition occurs when fluid enters the dark energy era at $\gamma = -1$, and violates the Null Energy Condition $\rho + \sigma < 0$, with the soliton strictly dominating the local curvature gradient $\omega^{\prime}(r_H) > f^{\prime\prime}(r_H)$, to keep the throat open. Moreover, by smoothly attenuating at spatial infinity, the soliton preserves the exact cosmological spacetime. Finally, through tensorial perturbation analysis, we demonstrate that the geometric flow introduces a localized dissipative mechanism, that the perturbation evolution reduces to damped wave equation, imposing geometric drag on the manifold.}  
	
	\keywords{ Traversable wormholes, Black hole, Almost $\eta$-Ricci-Yamabe solitons, Imperfect fluid, static spherically symmetric spacetime, Reissner-Nordström-de Sitter, Null Energy Condition, Perturbation.}
	\maketitle
	\section{Introduction}
	Recently, the study of geometric flow and solitons has evolved into a profound framework in differential geometry and mathematical physics. The wormholes are a very fascinating topic in mathematical physics over the last couple of decades, theoretically derived from Einstein’s field equations, they are hypothetical spacetime tunnels connecting distinct regions of the universe \cite{hochberg1997geometric,visser1997geometric}. Their defining geometric feature is the wormhole throat, which is the narrowest pinch of the tunnel, where the radius reaches its absolute minimum. Also, when the matter or light can safely pass through the tunnel, they are known as traversable wormholes \cite{nath2024new}. A traversable wormhole static with a throat and two mouths open constructed by Morris \& Throne \cite{morris1988wormholes}. According to framework of General Relativity \cite{einstein1915feldgleichungen} a region of spacetime where extreme gravitational collapse and an immense amount of matter are packed in very small area. It prevents any physical quantity including light, from escaping is known as a black hole and the boundary, where gravity is so strong that it sucks everything inside the blcak hole is called the horizon \cite{hawking2023large,wald2010general}. Therefore, traversable wormholes characterized as absence of horizon because horizon can prevent two-way travel \cite{morris1988wormholes}.\\
	Now, we shall begin with the required mathematical formulation. Firstly, the Ricci flow introduced by Hamilton \cite{hamilton1982three} in 1982. Later, study by many authors see \cite{cao2009recentprogressriccisolitons, cao2009geometrycompletegradientshrinking, perelman2003ricci} it is defined as:
	\[ \frac{\partial }{\partial t}g(t) = -2 S(t), \hspace{1cm}t \ge 0, \hspace{0.5cm} g(0) = g, \]
	where $g$ and $S$ are denotes the Riemannian metric and the $(0,2)$ - symmetric Ricci tensor respectively. Consider a Riemannian manifold $(M^n,g)$, confirms a Ricci soliton which is simply generalized an Einstein metric corresponding to self similar soliton. Also, if there exits smooth vector field denoted by $V$, and a constant denoted by $\lambda$, on the manifold given as: \[  \frac{1}{2} \mathcal{L}_{V}g + S =\lambda g,\] 
	where the Lie derivative denoted by $ \mathcal{L}_{V} g $ along the vector field $ V $ \cite{cao1996existence, cao2009recentprogressriccisolitons}. In 1960, Yamabe introduced Yamabe's problem \cite{yamabe1960deformation}, after that Yamabe flow was introduced by Hamilton \cite{hamilton1989lectures} which is conformal to Riemannian metric $g$, which is defined as follows:
	\[ \frac{\partial }{\partial t}g(t) = -\mathcal{R} g(t), \]
	where $ \mathcal{R}$ denodes the scalar curvature \cite{khatri2022study}. Just like Ricci soliton, it is also a self-similar soliton it is defined as: \[  \frac{1}{2} \mathcal{L}_{V}g  =(\lambda - \mathcal{R} )g. \]
	The mathematical significance of these  solitons was acknowledged. When G. Perelman use gradient Ricci solitons to provide spectacular proof of the Poincaré conjecture \cite{perelman2002entropy, perelman2003ricci}. A method further formalized in literature to mathematically synthesize the advantage of both deformations. Therefore, the combined Ricci-Yamabe or $(\alpha,\beta)$-Ricci-Yamabe flow was introduced, which is defined as follows:
	\begin{definition} \cite{guler2019ricci}
		Let a map $RY^{(\alpha, \beta, \omega, g)} : I \to T_2^s(M)$, given as:
		
		\[RY^{(\alpha, \beta, \omega, g)} = \frac{\partial g}{\partial t} + 2\alpha S(t) + \beta \mathcal{R}g(t) \tag{I} \label{I},\]
		
		then equation (\ref{I}) is said to be Ricci-Yamabe or $(\alpha,\beta)$-Ricci-Yamabe map of the given manifold.\\ 
		If $$RY^{(\alpha, \beta, \omega, g)} \equiv 0,$$ then the $g(.)$ is said to be Ricci-Yamabe or $(\alpha,\beta)$-Ricci-Yamabe flow. 
	\end{definition}
	Depending on the signs of $\alpha$ and $\beta$, the Ricci-Yamabe flow can be singular, pseudo-Riemannian, or Riemannian, providing flexibility and helping the study of spacetime geometry or the analysis of physical models of relativistic theories. Now according to Dey \cite{dey2020almost} the Ricci-Yamabe or $(\alpha,\beta)$-Ricci-Yamabe  soliton  is defined as: 
	\begin{definition} \cite{dey2020almost}
		A Riemannian or pseudo-Riemannian manifold $(M^n, g)$ is said to be Ricci-Yamabe soliton $(g, V, \mu, \alpha, \beta, \omega)$ if
		\[  \frac{1}{2}\mathcal{L}_{V}g + \alpha S  = \left( \lambda - \frac{\beta \mathcal{R}}{2}\right)g.\, \]
		
		So the Ricci-Yamabe or $(\alpha,\beta)$-Ricci-Yamabe soliton is locally steady, shrinking or expanding depends on whether $\lambda = 0$, $\lambda <  0$ or $\lambda > 0$ respectively. 
	\end{definition}
	The gradient Ricci-Yamabe soliton characterized by the existence of a smooth function $ f: M \to \mathbb{R} $ such that $ V = Df $, where $ D$ is denotes the gradient operator corresponding to the metric \( g \) on the manifold. Also, the Ricci-Yamabe soliton is simply generalization of the Ricci and Yamabe solitons, which is well known Einstein soliton \cite{catino2016gradient, dwivedi2021some, venkatesha2019gradient}. However, in physical spacetime, geometry does not evolve in vacuum, it is coupled with covariant vector fields generated by matter distributions or spacetime symmetries \cite{blaga2014eta, cho2009ricci}. Also, some results ``On Ricci-Yamabe soliton and geometrical structure in perfect fluid spacetime'' explained by Singh \& Khatri \cite{singh2021ricci}.
	In 2009 Ricci Yamabe soliton to $\eta$-Ricci Yamabe soliton extended by Cho \& Kimura \cite{cho2009ricci}, after that more generalization of $\eta$-Ricci-Yamabe soliton type $(\alpha,\beta)$ given by Siddiqui and Akyol \cite{siddiqi2020eta}, defined as:
	\begin{equation}
		\frac{1}{2} \mathcal{L}_{V}g + \alpha S + \left(\lambda - \frac{\beta \mathcal{R}}{2}\right) +  \omega \eta \otimes \eta =0,
	\end{equation}
	with that soliton is steady when $\lambda = 0$, shrinking when $\lambda < 0$, and expanding when $\lambda > 0$. Here $\omega$ denotes the scaling factor, $\eta$ denotes the associated 1-form, $\alpha$ and $\beta$ denotes the Ricci and Yamabe perturbation parameter respectively \cite{siddiqi2020eta}. When $\omega = 0 $, then it reduces to the Ricci-Yamabe soliton. If the constants $\alpha$, $\beta$, $\lambda$ and $\omega$ are treated as smooth radial functions ($\alpha(r)$, $\beta(r)$, $\lambda(r)$ and $\omega(r)$) on the manifold to dynamically couple the geometric flow, that radially varying thermodynamic pressure of the surrounding imperfect fluid. Then the spacetime dynamically evolves under the framework of an almost $\eta$-Ricci-Yamabe soliton \cite{shaikh2026symmetry,siddiqi2020eta}.\\
	Also, we have defined the background manifold $(\mathcal{M}^4,g)$ as a 4-dimensional static, spherically symmetric spacetime, coupled to an imperfect fluid \cite{alkhaldi2021imperfect, rahman, livine2025effective, hiscock1983stability, stephani2009exact, sushkov2005wormholes}. The line element   describing this geometry is given as follows:
	\begin{equation}
		ds^2 = -f(r)dt^2 + f(r)^{-1}dr^2 + r^2(d\theta^2 + \sin^2\theta d\phi^2),  \label{1.2}
	\end{equation}
	where $f(r)$ acts as the structural lapse function with the temporal, radial, and angular coordinates defined respectively by $-\infty < t < \infty$, $r \ge 0$, and $\phi \in [0, 2\pi]$ \cite{shaikh2026symmetry}. We have also evaluated the ``Reissner-Nordström-de Sitter (RNdS) spacetime'' to verify validity of the study  and ground the topological theorems in an astrophysical framework \cite{ali2003spinning, gim2019charged, shaikh2026symmetry, stephani2009exact}. It models a spherically symmetric, electric charged black hole with cosmological constant denoted by $\Lambda$ and the structural metric function $f(r)$ for RNdS spacetime given as:
	\begin{equation}
		f(r) = 1 - \frac{2M}{r} + \frac{Q^2}{r^2} - \frac{\Lambda r^2}{3},  \label{1.3}
	\end{equation}
	where $Q$ is denotes the electrically charge, $M$ is denotes the mass of  black hole\cite{shaikh2026symmetry}. We can find an event horizon, which dictates that the global boundary of this black hole acts purely as a coordinate singularity $(f(r)=0)$, rather than a true physical divergence \cite{hawking2023large, wald2010general}. However, dynamic geometric flows are analyzed locally at the apparent horizon, which is defined as the outermost boundary where the expansion of outgoing null geodesics vanishes completely \cite{ashtekar2004isolated, penrose1965gravitational}.\\
	\\
	The thermodynamic expansion of the geometry is strictly governed by the stress-energy tensor of the surrounding matter, which is physically realized in this paper as an accretion disk. According to relativistic astrophysics, as a dynamic rotating pool of diffuse matter and plasma spiralling into the central black hole \cite{destounis2019dynamical, hawking2023large, jimenezthermodynamics, wald2010general}. Where as the classical cosmological exact solution relies on the assumption of a perfect fluid, which characterized by isotropic pressure $\rho$, energy density $\sigma$ and the energy-momentum tensor \(T\), \cite{o1983semi} defined as follows:
	\[ T_{\mu\nu} = (\rho + \sigma) u_{\mu} u_{\nu} + \rho g_{\mu\nu},\]
	since, the concept of a perfect fluid fails to describe the real astrophysical environments. Therefore recent observations have shifted towards the study of imperfect fluid. This provide a more real representation than perfect fluids and shows complex internal friction, heat flux, and atmospheric stress, \cite{abramowicz2013foundations, israel1979transient, maartens1996causal, shakura1973black}. According to barotropic equation of state \( \rho = \gamma \sigma \) where \(\gamma\) is state parameter, classifies the cosmological era \cite{capozziello2011extended, carroll2004introduction} at \(\gamma=-1 \), imperfect fluids enter the dark energy era. Under this condition fluid strictly violates the Null energy conditions (NEC), which means \( \sigma + \rho < 0 \), \cite{martin2017classical}  as pioneered by Morris \& Throne \cite{morris1988wormholes} and later generalized by Visser \cite{visser1995lorentzian}. This geometric flare-out condition is mandatory to de-singularize a black hole geometry at apparent horizon $(r=r_H) $, and sustain a traversable wormhole, which is a non-singular topological bridge that connecting two distinct regions of spacetime \cite{hochberg1997geometric, sushkov2005wormholes, visser1997geometric}. Consequently, recent mathematical physics has focused on determining whether generalized topological flows can formally map a singular trapped horizon to a regularized state \cite{husain2008ricci, lobo2017wormholes}, and whether such static geometries can theoretically sustain localized dissipative mechanisms against metric perturbations \cite{chandrasekhar1998mathematical}.\\
	To precisely contextualize the theoretical advancements of the study. We compare our framework with some recent studies in this field. Many authors contributed to generalizing the Ricci and Yamabe soliton framework and study their properties like Blaga et al. study almost $\eta$-Ricci and almost $\eta$-Yamabe solitons \cite{blaga2022almost,blaga2018almost}, Siddiqi and Akyol study $\eta$-Ricci-Yamabe soliton \cite{siddiqi2020eta}, then evolution to almost $\eta$-Ricci-Yamabe solitons see \cite{singh2021ricci, khatri2025almost, shaikh2026symmetry, mert2024pseudosymmetric,  pandey2024alpha}. The coupling parameters of the almost $\eta $ -Ricci-Yamabe soliton, can act as smooth functions rather than rigid constants \cite{shaikh2026symmetry, jafari2026generalized}. Most of the study based on perfect fluid spacetime which purely theoretical model and having no heat, friction and viscosity. Therefore recent studies move towards imperfect fluid which is more realistic model. It has friction, heat and viscosity. Alkhaldi, Siddiqi et al. study that GRW-Spacetime Admits Ricci-Yamabe Metric in imperfect fluid spacetime \cite{alkhaldi2021imperfect}. Also we can find many studies on static, spherically symmetric spacetime in perfect fluid see \cite{livine2025effective, carr2003spacetime, rahman}. In our framework we investigate the static, spherically symmetric imperfect fluid spacetime, admitting almost $\eta$-Ricci-Yamabe solitons where parameters are smooth functions and imperfect fluid have heat and friction it allows the geometry to dynamically adapt to realistic anisotropic stresses, a scenario largely unexplored in current soliton literature.\\
	Recent extensive studies by Nath and Sarma \cite{nath2024new}, Meshwa and Ahmed \cite{kurbah2024electromagnetic}, Visser et al. \cite{visser2003traversable, simpson2019black}, Konoplya and Zhidenko \cite{konoplya2022traversable}, Cataldo et al. \cite{cataldo2017traversable}, the methodology in these studies typically in a way where, authors first assume a mathematically convenient shape function $b(r)$ for the throat, and then calculate the exotic matter required to support it. In our framework, we don't assume shape function  $b(r)$ apriori, we analytically prove that when an imperfect fluid enters the dark energy era ($\gamma = -1$), the geometric flow endogenously generates the necessary effective stress and the almost $\eta$-Ricci-Yamabe flow naturally forces the the spatial flare-out condition $\omega(r_H) > 0$, and regularizes the time coordinate, to create the wormhole geometry.\\
	In recent years, authors like Mustafa et al. \cite{mustafa2021traversable}, Ilyas et al. \cite{ilyas2023traversable}, Mishra et al. \cite{mishra2022traversable} , Sahoo et al. \cite{sahoo2021traversable}, for more see \cite{ashraf2026traversable, malik2023investigation, shamir2020traversable, sokoliuk2022generalised}. They realized that exotic matter is physically problematic. To get around this they still  assume shape function $b(r)$ and change the Einstein field equations to modified gravity such as $f(\mathcal{R})$ theories. The extra mathematical terms from the modified gravity act as ``effective" exotic matter, allowing the actual physical fluid to be normal matter. Since, we don't assume shape function apriori. Therefor, integrating our framework into $f(\mathcal{R})$ modified gravity, it will simultaneously eliminate the need for both apriori shape assumptions and physical exotic matter, defines the optimal trajectory for our future research.\\
	
	\section{Notations and Preliminary}
	\justifying
	We adopt the following notations throughout this paper. ($\mu, \nu, \varrho, l, s $) letters denote spacetime indices, appearing as superscripts for contravariant and subscripts for covariant components over coordinates $(t, r, \theta, \phi)$. The background metric tensor is denoted by $g_{\mu\nu}$ with structural lapse function $f(r)$. $\Lambda$, $\kappa$,  $r_H$, $T_H$, $K$, denotes Cosmological constant, Einstein gravitational constant, the apparent horizon, Hawking temperature and surface gravity respectively. For traversable wormhole geometries, $\Phi(r)$ is the redshift function and $b(r)$ is the shape function. The Ricci curvature tensor is denoted by $S_{\mu\nu}$ and the Ricci scalar by $\mathcal{R}$. The energy-momentum tensor of imperfect fluid and electromagnetic stress-energy tensors are $T_{\mu\nu}$ and $E_{\mu\nu}$, characterized by 4-velocity $u^\mu$, trace of energy-momentum tensor is $T$ , mass-energy density $\rho$, isotropic pressure $\sigma$, equation of state parameter $\gamma$, heat flux $q_\mu$, and shear stress $\pi_{\mu\nu}$. The radial null vector is $k^\mu$. The almost $\eta$-Ricci-Yamabe soliton is generated by the radial vector field $\xi$ with dual 1-form $\eta_\mu$, utilizing smooth radial functional parameters $\alpha(r)$ and $\beta(r)$, are Ricci and Yamabe coupling, also solitonic expansion and scaling factor $\lambda(r)$, and $\omega(r)$ respectively. The solitonic scalar potential is denoted as $\Psi$, and $\Delta \Psi$ denotes the kinematic expansion. $\nabla$ is the covariant derivative, and $\Gamma_{\mu\nu}^\varrho$ denotes the affine connections (Christoffel symbols). $h_{\mu\nu}$ is the first-order metric perturbation with trace $h$, and $\Box$ represents the d'Alembertian operator.\\
	
	Let $(\mathcal{M}^4,g)$ be a 4-dimensional static, spherically symmetric spacetime manifold coupled to an imperfect fluid. Also the geometric flow defined as an almost $\eta$-Ricci-Yamabe soliton along the radial vector field $ \xi = \partial_r$.  Where the expansion parameters are smooth radial functions \cite{destounis2019dynamical,shaikh2026symmetry}. We analyze the manifold containing a charged black hole embedded in a universe with a cosmological constant $\Lambda$, where the geometry is strictly governed by the line element (\ref{1.2}).
	\begin{definition}\cite{pigola2011ricci,shaikh2026symmetry,siddiqi2020eta} \label{definition 3.} \label{def 3}
		The 4-dimensional $(\mathcal{M}^4,g)$ manifold admits an almost $\eta$-Ricci-Yamabe soliton if it satisfies along a radial vector field $\xi = \partial_r$ as: 
		\begin{equation}
			\frac{1}{2} \mathcal{L}_{\xi}g_{\mu\nu} + \alpha(r) S_{\mu\nu} + \left(\lambda(r) - \frac{\beta(r) \mathcal{R}}{2}\right)g_{\mu\nu} +  \omega(r) \eta_\mu \eta_\nu = 0,  \label{2.1}
		\end{equation}
		where $\alpha(r), \beta(r), \lambda(r)  $ and $\omega(r)$ are smooth functional parameters on manifold. Such as $\alpha(r)$ and $\beta(r)$ represent the Ricci and Yamabe parameters. $\lambda(r)$ shows that whether the geometry at distance $r$ is steady at $\lambda(r) = 0$, shrinking at $\lambda(r) < 0$, or expanding at $\lambda(r) > 0$. The solitonic scaling factor  $\omega(r)$ as a differential multiplier for the flow, and $\eta_\mu$ is the associated 1-form. The parameters are smooth functions depending exclusively on the radial coordinate $r$ as our spacetime is static and spherically symmetric. Therefore to preserve these exact symmetries, the parameters cannot depend on $t, \theta,$ or $\phi$. Physically, the flow is coupled to an imperfect fluid whose density and pressure vary only with the radial distance from the black hole. Therefore, for the geometric flow to correctly balance the fluid's physical stress, the solitonic parameters must dynamically adapt as purely radial functions.
	\end{definition}
	Before defining the energy-momentum tensor, we establish the physical motivation for using an imperfect fluid. In highly idealized models, black holes are often treated in a vacuum or surrounded by a friction-less perfect fluid. However, realistic astrophysical black holes are actively fed by surrounding accretion disks, which consist of swirling clouds of gas and plasma that naturally possess internal friction and unequal pressure distributions. Therefore, to accurately model a realistic accretion environment, we mathematically coupled the background spacetime to an imperfect fluid.
	\begin{definition} \cite{hiscock1983stability,israel1979transient,rezzolla2013relativistic} \label{definition 4.}
		The surrounding matter is characterized as a imperfect fluid ( Accretion disk) with a 4-velocity $ u^{\mu} = (1/\sqrt{f(r)}, 0, 0, 0)$ and $ u_{\mu} =(-\sqrt{f(r)}, 0, 0, 0)$ satisfying the standard relativistic normalization $ g_{\mu\nu}u^\mu u^\nu = -1 $.\\
		Then the energy-momentum tensor denoted by $T_{\mu\nu}$ without bulk viscosity, defined as: 
		\begin{equation}
			T_{\mu\nu} = (\sigma + \rho)u_{\mu }u_{\nu} + \rho g_{\mu\nu} + q_{\mu} u_{\nu} + q_{\nu} u_{\nu} + \pi_{\mu\nu}, \label{2.2}
		\end{equation}
		where $ \sigma$ and $\rho$  denotes mass-energy density and isotropic pressure respectively also $q_\mu = (0, q_r, 0, 0)$ denotes the radial heat flux, and $\pi_{\mu\nu}$ denotes the traceless anisotropic shear stress tensor, representing the fluid friction. 
	\end{definition}
	
	The evaluation of the soliton is need to know how the accretion disk and the central mass combine to warp the Ricci curvature $S_{\mu\nu}$. Then defined modified field equation where geometric matter are coupled by Einstein's equation as: 
	\begin{definition}\cite{wald2010general, carroll2004introduction, alkhaldi2021imperfect} \label{definition 5.}
		For the manifold coupled to an imperfect fluid energy-momentum tensor $T_{\mu\nu}$ and an electromagnetic field $E_{\mu\nu}$, and including the cosmological constant denots as $\Lambda$,  then the modified Einstein field equation is defined as: 
		\begin{equation}
			S_{\mu\nu} - \frac{1}{2} \mathcal{R} g_{\mu\nu} + \Lambda g_{\mu\nu} = \kappa (T_{\mu\nu} + E_{\mu\nu}), \label{2.3}
		\end{equation}
		where $\kappa = 8\pi G/c^4$ denotes the Einstein gravitational constant.
	\end{definition}
	\begin{definition}
		\cite{carroll2004introduction} \label{definition 6.}
		In classical general relativity the Null Energy Condition (NEC) state that energy-momentum tensor $T_{\mu\nu}$ satisfies $T_{\mu\nu}k^{\mu}k^{\nu} \ge 0$ for all null vector field $k^{\mu}$ or equivalently that ($\sigma + \rho \ge 0$). The intense internal friction of accretion fluid drives it into a dark energy-like vacuum state ($\rho = \gamma \sigma$, where $ \gamma \le -1$). This generates extreme negative radial pressure  actively forcing a violation of NEC ($\sigma + \rho < 0$) \cite{NEC2022criteria}.
	\end{definition}
	\begin{definition}
		\cite{adamiak2008static,morris1988wormholes, visser1995lorentzian} \label{definition 7.}
		The metric describing a static, spherically symmetric Lorentzian wormhole can be articulated in Schwarzschild coordinates  $(t, r, \theta, \phi)$, is defined as follows: 
		\begin{equation}
			ds^2 = -e^{2\Phi(r)}dt^2 + \left(1 - \frac{b(r)}{r}\right)^{-1}dr^2 + r^2(d\theta^2 + \sin^2\theta d\phi^2),     \label{2.4}
		\end{equation}
		where $\Phi(r)$, $b(r)$ denotes the redshift function and shape function respectively. The values of radial the coordinate $r$ in between $(r_H \le r <\infty)$ . To avoid the formation of an event horizon $\Phi(r)$ must finite everywhere in spacetime. 
	\end{definition}
	\begin{definition}
		\cite{morris1988wormholes,sushkov2005wormholes} \label{definition 8.}
		The Morris-Thorne Flare-out Constraint constitutes for a wormhole throat to exist $b(r)$ obeys the flare-out condition $b(r_H) = r_H$  at boundary $r = r_H $ meaning the geometry must open outward rather than pinching into a singularity  and an outward-flaring derivative $b'(r_H) < 1$.
	\end{definition}
	\section{Results}
	\justifying
	Before proving our results, it is important to distinguish between our mathematical assumption and physical phenomena. Mathematically, we assume that our static, spherically symmetric spacetime manifold coupled to imperfect fluid admits an almost $\eta$-Ricci-Yamabe flow. Where the variables $ \alpha(r), \beta(r), \lambda(r) $ and $\omega(r)$ are defined in definition \ref{definition 3.} and this geometric flexibility ensures that the parameters depend strictly on the radial coordinate. This allows the geometric flow to dynamically adapt to the thermodynamic state of the surrounding imperfect fluid. Because of imperfect fluid characterized by anisotropic shear stress $\pi_{\mu\nu}$ and radial heat flux  $q_\mu $.\\
	Physically, the geometric flow acts dynamically in a way as the accretion fluid enters a dark energy phase at $\gamma = -1 $, its extreme negative radial pressure and anistropic friction supply the requisite exotic stress. Then functional parameters naturally adapt to these intense thermodynamic forces driving a structural reconfiguration of the spacetime. This interaction regularizes the apparent horizon forcing a spatial flare-out condition that de-singularizes the black hole geometry into a stable, traversable throat.\\
	
	Now, from the line element (\ref{1.2}), the non-vanishing temporal and radial components (coordinate differentials $dt^2$ and $dr^2$) of the metric tensor $g_{\mu\nu}$ are strictly defined by the structural function as:
	\[ g_{tt} = -f(r) \text{ and } g_{rr} =  \frac{1}{f(r)}.\]
	To construct the solitonic framework, we must compute 1-form $\eta_\mu$ imposing the strict geometric coupling defining it as the covariant dual of the radial vector field ($\eta_\mu = \xi_\mu $). The geometric flow by the given radial vector field $ \xi^{\mu} = (0, 1, 0, 0)$ then  1-form $ \eta_{t} $ corresponding temporal component trivially vanishes as:
	\begin{align*}
		\xi_t &= g_{tt}\xi^{t} \\
		\implies \eta_t &= 0.
	\end{align*}
	While, the 1-form $ \eta_{r} $ corresponding radial component evaluated as: 
	\begin{align}
		\xi_r &= g_{rr}\xi^{r}  \notag\\
		\implies \eta_r &= \frac{1}{f(r)}.  \label{3.1}
	\end{align}
	Next, we define the manifold deformation generated by this radial vector field $\xi = \partial_r$ i.e, with coordinate  components $ \xi^\lambda = \delta^\lambda_r$. Then the Lie derivative of the metric tensor along this flow \cite{debook, wald2010general} is formally expressed as:
	\begin{equation}
		\mathcal{L}_{\xi} g_{\mu\nu} = \xi^\lambda \partial_\lambda g_{\mu\nu} + g_{\lambda\nu} \partial_\mu \xi^\lambda + g_{\mu\lambda} \partial_\nu \xi^\lambda, \label{3.2}
	\end{equation}
	because we using specific vector field possesses strictly constant coordinate components ($\xi^\lambda = \delta^\lambda_r$), all partial derivatives of the vector field identically vanish ($\partial_\mu \xi^\lambda = \partial_\nu \xi^\lambda = 0$). Then  equation (\ref{3.2}) reduces the Lie derivative to a pure partial radial derivative of the metric components: 
	\begin{equation}
		\mathcal{L}_{\xi} g_{tt} = -f'(r), \quad \mathcal{L}_{\xi} g_{rr} = -\frac{f'(r)}{f(r)^2}. \label{3.3}
	\end{equation}
	
	\begin{theorem} \label{theorem1} 
		Let $(\mathcal{M}^4,g)$ be a 4-dimensional static, spherically symmetric spacetime manifold, characterized by the structural function $f(r)$ and coupled to an imperfect fluid, admitting an almost $\eta$-Ricci-Yamabe soliton along the radial vector field $\xi = \partial_r$. Then the scaling parameter $\omega(r)$ is strictly governed by the local gravitational lapse and the thermodynamic drag of the fluid according to:$$\omega(r) = f'(r) - \alpha(r) \Big[ f(r)^2 S_{rr} + S_{tt} \Big].$$
	\end{theorem}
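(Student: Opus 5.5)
The plan is to read off the $tt$ and $rr$ components of the soliton equation (\ref{2.1}), using the Lie derivative (\ref{3.3}) and the 1-form components $\eta_t=0$, $\eta_r = 1/f(r)$ from (\ref{3.1}). We then eliminate the combination $\lambda(r)-\beta(r)\mathcal{R}/2$ between the two equations. Throughout, abbreviate $\Theta(r) := \lambda(r) - \frac{\beta(r)\mathcal{R}}{2}$. This combination multiplies $g_{\mu\nu}$ and is the only term we do not want in the final formula.

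\textbf{Step 1 (temporal component).} Take $\mu=\nu=t$ in (\ref{2.1}). Since $\eta_t=0$, the $\omega$-term drops out. Using $\mathcal{L}_\xi g_{tt}=-f'$ and $g_{tt}=-f$, this gives
\begin{equation*}
-\tfrac{1}{2}f'(r) + \alpha(r) S_{tt} - f(r)\,\Theta(r) = 0 .
\end{equation*}

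\textbf{Step 2 (radial component).} Take $\mu=\nu=r$. Now $\eta_r\eta_r = 1/f^2$, $\mathcal{L}_\xi g_{rr} = -f'/f^2$ and $g_{rr}=1/f$, so
\begin{equation*}
-\frac{f'(r)}{2f(r)^2} + \alpha(r) S_{rr} + \frac{\Theta(r)}{f(r)} + \frac{\omega(r)}{f(r)^2} = 0 .
\end{equation*}
Multiplying through by $f(r)^2$, valid away from the zeros of $f$, gives
\begin{equation*}
-\tfrac{1}{2}f'(r) + \alpha(r) f(r)^2 S_{rr} + f(r)\,\Theta(r) + \omega(r) = 0 .
\end{equation*}

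\textbf{Step 3 (elimination).} Add the two equations. The terms $\mp f\Theta$ cancel exactly, which is why the $\lambda$ and $\beta$ dependence disappears from the statement. What remains is $-f' + \alpha\,(S_{tt}+f^2S_{rr}) + \omega = 0$, which rearranges to the claimed identity
\begin{equation*}
\omega(r) = f'(r) - \alpha(r)\big[f(r)^2S_{rr}+S_{tt}\big].
\end{equation*}

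The only delicate point is the division and multiplication by $f(r)$ in Step 2. The derivation is valid on any region where $f\neq 0$. The identity is polynomial in $f$ once multiplied out, so it extends by continuity to the horizon $r=r_H$, provided $\omega$, $\alpha$ and the curvature components are smooth there, as assumed in Definition~\ref{def 3}. One should also state that the angular components of (\ref{2.1}) impose separate constraints on $\Theta$, but they are not needed for this formula.

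The physical interpretation in the statement, that $S_{tt}$ and $S_{rr}$ carry the fluid's thermodynamic drag, would then follow by substituting the field equations (\ref{2.3}) with the imperfect-fluid tensor (\ref{2.2}). This is interpretive and not required for the identity itself.
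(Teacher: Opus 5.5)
Your proposal is correct and follows essentially the paper's route: both read off the $tt$ and $rr$ components of (\ref{2.1}) and eliminate the combination $\lambda(r)-\beta(r)\mathcal{R}/2$. The only difference is cosmetic. The paper first solves the temporal equation for $\lambda(r)$, dividing by $f$, and substitutes the result into the radial equation; you instead add the two equations after multiplying the radial one by $f^2$, which avoids that extra division by $f$.
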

	\begin{proof}
		To explicitly determine the functional parameters $\lambda(r)$ and $\omega(r)$, we systematically evaluate the components of the equation (\ref{2.1}).\\
		We begin by extracting the temporal component. For the radial vector field $\xi = \partial_r$ acting on the metric $g_{tt} = -f(r)$, the relevant geometric quantities evaluate strictly to $\mathcal{L}_{\xi}g_{tt} = -f'(r)$ and $\eta_t = 0$. Substituting these into equation (\ref{2.1}) yields:
		\begin{align}
			\frac{1}{2}\left(-f'(r)\right) + \alpha(r)  S_{tt} + \left(\lambda(r)  - \frac{\beta(r) \mathcal{R}}{2}\right)(-f(r)) + \omega(r)(0) &= 0 \label{3.4}\\
			\implies -\frac{f'(r)}{2} + \alpha(r) S_{tt} - \lambda(r)  f(r) + \frac{\beta(r) \mathcal{R} f(r)}{2} &= 0, \label{3.5}
		\end{align}
		dividing equation (\ref{3.5}) by $f(r)$ and rearranging to isolate the metric parameter, we obtain the explicit solution for  $\lambda$:
		\begin{equation}
			\alpha(r)  \frac{S_{tt}}{f(r)} - \frac{f'(r)}{2f(r)} + \frac{\beta(r) \mathcal{R}}{2} = \lambda(r). \label{3.6}
		\end{equation}
		Therefore, 
		\begin{equation}
			\lambda(r) = \alpha(r) \frac{S_{tt}}{f(r)} - \frac{f'(r)}{2f(r)} + \frac{\beta(r) \mathcal{R}}{2}. \label{3.7}
		\end{equation}
		Similarly, we determine the scaling parameter $\omega(r)$ via the radial component. Using the metric component $g_{rr} = f(r)^{-1}$, the geometric terms evaluate to $\mathcal{L}_{\xi}g_{rr} = - \frac{f'(r)}{f(r)^2}$ and $\eta_r = \frac{1}{f(r)}$. Substituting these into the equation (\ref{2.1}) provides:
		\begin{align}
			\frac{1}{2}\left(-\frac{f'(r)}{f(r)^2}\right) + \alpha(r) S_{rr} + \left(\lambda(r) - \frac{\beta(r) \mathcal{R}}{2}\right)\left(\frac{1}{f(r)}\right) + \omega(r)\left(\frac{1}{f(r)^2}\right) &= 0 \label{3.8}  \\
			\implies	-\frac{1}{2}f'(r) + \alpha(r) S_{rr} f(r)^2 + \left(\lambda(r) - \frac{\beta(r) \mathcal{R}}{2}\right)f(r) + \omega(r) &= 0  \label{3.9} \\
			\implies \hspace{1.24cm}	\frac{f'(r)}{2} - \alpha(r) S_{rr} f(r)^2 - \left(\lambda(r) - \frac{\beta(r) \mathcal{R}}{2}\right)f(r) &= \omega(r). \label{3.10}
		\end{align}
		Substitute, equation (\ref{3.7}) in equation (\ref{3.10}), then we obtain
		\begin{align}
			\omega(r) &= \frac{f'(r)}{2} - \alpha(r) S_{rr} f(r)^2 - \left( {\left[ \alpha(r) \frac{S_{tt}}{f(r)} - \frac{f'(r)}{2f(r)} + \frac{\beta(r) \mathcal{R}}{2} \right]} - \frac{\beta(r) \mathcal{R}}{2} \right)f(r) \notag \\
			\implies	\omega(r) &= \frac{f'(r)}{2} - \alpha(r) S_{rr} f(r)^2 - \left( \alpha(r) \frac{S_{tt}}{f(r)} - \frac{f'(r)}{2f(r)} \right)f(r).  \label{3.11}
		\end{align} 
		Therefore, \begin{equation}
			\omega(r) = f'(r) - \alpha(r) \Big[ S_{rr} f(r)^2 + S_{tt}  \label{3.12} \Big].
		\end{equation}
		Hence, concludes the proof of theorem \ref{theorem1}. The following corollary can be given as,
	\end{proof}
	
	\begin{corollary} \label{C1}
		Consider the manifold as established in theorem \ref{theorem1}. To ensure the radial functional parameter $\lambda(r)$ remains mathematically finite at the apparent horizon $f(r_H) = 0$, the temporal coordinate singularity must be regularized, strictly imposing the condition $\alpha(r_H) S_{tt}(r_H) = \frac{1}{2} f'(r_H)$. Consequently, this establishes a direct thermogeometric relation with the Hawking temperature ($T_H$), yielding $\alpha(r_H) S_{tt}(r_H) = 2\pi T_H$. This temporal regularization provides the foundational mathematical mechanism to de-singularize the manifold.
	\end{corollary}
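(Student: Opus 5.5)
The plan is to start from the explicit expression (\ref{3.7}) for $\lambda(r)$ obtained in the proof of Theorem \ref{theorem1}. It can be rewritten with a common denominator as
\begin{equation*}
\lambda(r) = \frac{2\alpha(r) S_{tt}(r) - f'(r)}{2f(r)} + \frac{\beta(r)\mathcal{R}}{2}.
\end{equation*}
Since $\beta$ is smooth and $\mathcal{R}$ is finite for the metric (\ref{1.2}) (for RNdS, with $f$ as in (\ref{1.3}), $\mathcal{R}$ is a regular function of $r$ near $r_H>0$), the second term is bounded near $r_H$. The only possible divergence therefore comes from the first term, whose denominator vanishes as $f(r)\to 0$ at the apparent horizon.

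Next, I will use the fact that $\alpha$, $S_{tt}$ and $f'$ are smooth, so $N(r):=2\alpha(r)S_{tt}(r)-f'(r)$ is continuous at $r_H$. If $N(r_H)\neq 0$, then $|\lambda(r)|\to\infty$ as $r\to r_H$, because $f(r)\to 0$. Finiteness of $\lambda$ therefore forces $N(r_H)=0$, which is exactly $\alpha(r_H)S_{tt}(r_H)=\tfrac12 f'(r_H)$.

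Conversely, assume $N(r_H)=0$ and that $r_H$ is a simple zero of $f$, so $f'(r_H)\neq 0$ (a non-degenerate horizon). Then l'H\^opital's rule gives the limit $N'(r_H)/(2f'(r_H))$, which is finite. This shows the condition is both necessary and sufficient, up to the mild non-degeneracy assumption.

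Finally, I will invoke the standard surface gravity of (\ref{1.2}), $K = \tfrac12 f'(r_H)$, and the Hawking relation $T_H = K/2\pi$. Together these give $f'(r_H)=4\pi T_H$, and hence $\alpha(r_H)S_{tt}(r_H)=2\pi T_H$.

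The main obstacle is justifying the finiteness claim rigorously. This requires that $r_H$ be a simple zero (the extremal RNdS case $f'(r_H)=0$ would have to be excluded or treated separately), and that $S_{tt}$, which carries a factor of $f$ in the coordinate components, be understood as its smooth extension at $r_H$. I would first compute $S_{tt}=\tfrac12 f\,(f''+2f'/r)$ for (\ref{1.2}) and note the following tension: this $S_{tt}$ vanishes at $r_H$. The condition then reads $f'(r_H)=0$ unless $\alpha$ is allowed to blow up like $1/f$. I would handle this by treating the product $\alpha S_{tt}$ as the smooth quantity that the statement constrains, interpreting the corollary as a condition on this product, and stating that interpretation explicitly.
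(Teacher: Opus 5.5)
Your proposal is correct and follows the paper's route. You write $\lambda(r)=\frac{2\alpha S_{tt}-f'}{2f}+\frac{\beta\mathcal{R}}{2}$, force the numerator to vanish at $r_H$, and substitute $\mathcal{K}=\tfrac12 f'(r_H)$ and $T_H=\mathcal{K}/2\pi$. Your nonzero-over-zero divergence argument is the right justification for necessity, whereas the paper loosely says L'H\^opital's rule ``imposes'' the constraint.

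Your caveat about $S_{tt}$ is also genuine, and the paper never addresses it. For (\ref{1.2}) one has $S_{tt}=\tfrac12 f\,(f''+2f'/r)$. So with bounded $\alpha$, the term $-f'/(2f)=-\tfrac12(\ln|f|)'$ makes $\lambda$ unbounded near $r_H$ even in the extremal case. Reading the condition as a constraint on the product $\alpha S_{tt}$ with $\alpha\sim 1/f$, as you propose, is therefore the only way to give the corollary content. That reading, however, conflicts with the finite $\alpha(r_H)=1.6935$ the paper uses later.
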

	\begin{proof} 
		To evaluate the soliton at the apparent horizon ($r \to r_H$), we note that by the definition of an apparent horizon, the structural metric function vanishes, i.e., $\lim_{r \to r_H} f(r) = 0$ \cite{wald2010general}.\\
		From equation (\ref{3.5}), we obtain the radial functional parameter $\lambda(r)$:
		\begin{equation}
			\lambda(r) = \frac{2\alpha(r) S_{tt} - f'(r)}{2f(r)} + \frac{\beta(r) \mathcal{R}}{2}. \label{3.13}
		\end{equation}
		For the geometric flow to remain physically valid and continuous across the manifold, $\lambda(r_H)$ must be strictly finite. As $r \to r_H$, the denominator $f(r) \to 0$. Therefore, to prevent a geometric singularity, the numerator $2\alpha(r_H) S_{tt}(r_H) - f'(r_H)$ must simultaneously vanish at the exact coordinate $r_H$. By applying L'Hôpital's Rule to equation (\ref{3.13}) imposes the geometric constraint as:
		\begin{align}
			2\alpha(r_H) S_{tt}(r_H) - f'(r_H) &= 0   \label{3.14} \\
			\implies \alpha(r_H) S_{tt}(r_H) &= \frac{1}{2} f'(r_H). \label{3.15}
		\end{align}
		The surface gravity ($\mathcal{K}$), defined via Killing vector fields for a static, spherically symmetric spacetime, reduces at the apparent horizon to \cite{poisson2004relativist, carroll2004introduction}:
		\begin{equation}
			\mathcal{K} = \frac{1}{2} f'(r_H).     \label{3.16}
		\end{equation}
		Furthermore, Hawking established that the quantum thermal radiation emitted by the horizon (the Hawking Temperature, $T_H$) is strictly proportional to its surface gravity \cite{hawking1975particle}:
		\begin{align}
			T_H &= \frac{\mathcal{K}}{2\pi},   \label{3.17}  \\   
			\implies    \mathcal{K} &= 2\pi T_H.                 \label{3.18}
		\end{align}
		Substituting equations (\ref{3.16}) and (\ref{3.18}) into equation (\ref{3.15}), we obtain the exact thermogeometric equivalence:
		\begin{equation}
			\alpha(r_H) S_{tt}(r_H) = 2\pi T_H. \label{3.19}
		\end{equation}
		This completes the proof, demonstrating that the regularization of the apparent horizon explicitly couples the parameter $\alpha(r_H)$ to the Hawking temperature $T_H$.
	\end{proof}
	\begin{proposition} \label{P1}
		Let $(\mathcal{M}^4, g)$ be a 4-dimensional static, spherically symmetric spacetime manifold characterize by structural function $f(r)$ coupled to an imperfect fluid. Suppose that manifold admits an almost $\eta$-Ricci-Yamabe soliton along the radial vector field $\xi = \partial_r$. If the vector field is a gradient, such that $\xi = \nabla $, then the solitonic potential $\Psi$ satisfies the effective Poisson equation:
		\begin{equation}
			\Delta\Psi = (2\beta(r) - \alpha(r))\mathcal{R} - 4\lambda(r) - \frac{\omega(r)}{f(r)} \notag
		\end{equation}
		Consequently, the gradient soliton acts as an effective geometric potential field, where its Laplacian $\Delta\Psi$ directly corresponds to the kinematic expansion of the background manifold. 
	\end{proposition}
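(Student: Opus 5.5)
The plan is to take the trace of the defining soliton equation (\ref{2.1}) with respect to $g^{\mu\nu}$, after rewriting the Lie-derivative term through the gradient hypothesis. Here "$\xi=\nabla$" is read as $\xi=\nabla\Psi$ for a smooth potential $\Psi$, with $\xi^\mu=g^{\mu\nu}\partial_\nu\Psi$, and we retain $\eta_\mu=\xi_\mu$ as in the computation leading to (\ref{3.1}).

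\textbf{Step 1: Lie-derivative term.} For a gradient field, the torsion-free, metric-compatible connection gives
\[
\mathcal{L}_{\xi}g_{\mu\nu}=\nabla_\mu\xi_\nu+\nabla_\nu\xi_\mu=2\nabla_\mu\nabla_\nu\Psi .
\]
So the first term of (\ref{2.1}) is the Hessian of $\Psi$. Contracting with $g^{\mu\nu}$ gives
\[
\tfrac12 g^{\mu\nu}\mathcal{L}_\xi g_{\mu\nu}=\nabla^\mu\nabla_\mu\Psi=\Delta\Psi .
\]
The other option is the equivalent identity $\tfrac12 g^{\mu\nu}\mathcal{L}_\xi g_{\mu\nu}=\nabla_\mu\xi^\mu$, the divergence of $\xi$. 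This is what justifies calling $\Delta\Psi$ the kinematic expansion of the flow.

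\textbf{Step 2: remaining traces in four dimensions.} The curvature terms give $g^{\mu\nu}S_{\mu\nu}=\mathcal{R}$ and $g^{\mu\nu}g_{\mu\nu}=4$. For the $\eta$-term, only the $rr$ component survives. Using (\ref{3.1}) and $g^{rr}=f(r)$,
\[
g^{\mu\nu}\eta_\mu\eta_\nu=g^{rr}\eta_r\eta_r=f(r)\cdot\frac{1}{f(r)^2}=\frac{1}{f(r)} .
\]
Since $\eta_t=0$ and the angular components of $\eta$ vanish, no cross terms arise.

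\textbf{Step 3: assemble the trace.} Substituting into the traced form of (\ref{2.1}) gives
\[
\Delta\Psi+\alpha(r)\mathcal{R}+4\Big(\lambda(r)-\frac{\beta(r)\mathcal{R}}{2}\Big)+\frac{\omega(r)}{f(r)}=0 .
\]
Solving for $\Delta\Psi$ and collecting the $\mathcal{R}$ terms yields $(2\beta-\alpha)\mathcal{R}-4\lambda-\omega/f$, which is the claimed Poisson equation. The interpretation as an effective geometric potential then follows directly from Step 1.

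\textbf{Main obstacle.} The main difficulty is consistency between the gradient hypothesis and the earlier choice $\xi=\partial_r$ with $\eta_r=1/f$. A radial gradient $\xi^r=f\,\Psi'(r)$ equals $1$ only if $\Psi'=1/f$, so $\Psi=\int dr/f$ (a tortoise-type coordinate). In that case $\xi_r=\Psi'=1/f$, which matches (\ref{3.1}) exactly. I would state this choice of $\Psi$ explicitly so that Step 1 is legitimate. I would also note that the trace computation holds pointwise wherever $f\neq0$. At $r_H$, the $\omega/f$ term must be controlled using the regularity conditions of Corollary \ref{C1}.
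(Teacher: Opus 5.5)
Your proposal is correct and follows the paper's proof essentially line for line. You substitute $\mathcal{L}_\xi g_{\mu\nu}=2\nabla_\mu\nabla_\nu\Psi$ into (\ref{2.1}), trace with $g^{\mu\nu}$ using $g^{\mu\nu}g_{\mu\nu}=4$ and $\eta^\mu\eta_\mu=1/f(r)$, and identify $\Delta\Psi=\nabla_\mu\xi^\mu$ as the kinematic expansion.

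Your remark that $\partial_r=\nabla r_*$ with $r_*=\int dr/f$ is a valid consistency check that the paper leaves implicit. One correction to your closing aside: Corollary~\ref{C1} regularizes $\lambda$, not $\omega/f$. Since Theorem~\ref{Theorem3} gives $\omega(r_H)=f'(r_H)\neq0$, the term $\omega/f$ is genuinely singular at $r_H$, so the identity should be read pointwise where $f\neq0$.
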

	\begin{proof} According to the Hessian Matrix, Lie derivative of the metric $g_{\mu\nu}$, when a vector field is a gradient ($\xi = \nabla \Psi$)  \cite{Petersen16} given as:
		\begin{equation}
			(\mathcal{L}_\xi g)_{\mu\nu} = 2 \nabla_{\mu} \nabla_{\nu} \Psi.  \label{3.20}
		\end{equation}
		Now, substitute the equation (\ref{3.20}) in the equation (\ref{2.1}), we get
		\begin{equation}
			\nabla_{\mu} \nabla_{\nu} \Psi + \alpha(r) S_{\mu\nu} + \left(\lambda(r) - \frac{\beta(r) \mathcal{R}}{2}\right)g_{\mu\nu} + \omega(r) \eta_{\mu} \eta_{\nu} = 0, \label{3.21}
		\end{equation}
		after taking the trace of the equation (\ref{3.21}), we get
		\begin{equation}
			\Delta \Psi + \alpha(r) \mathcal{R} + 4\lambda(r) - 2\beta(r) \mathcal{R} + \omega{\eta^v \eta_v} = 0. \label{3.22}
		\end{equation}
		Since, manifold characterize by lapse function $f(r)$, then from $ \eta^v \eta_v = \| \eta \|^2 $ where $\eta_\mu = \xi_\mu$ and $\xi_t = 0, \xi_r = 1/f(r)$, we have $\|\eta\|^2 = 1/f(r)$, and apply this to equation (\ref{3.22}) we obtained,
		\begin{equation}
			\Delta \Psi = (2\beta(r) - \alpha(r)) \mathcal{R} - 4\lambda(r) - \frac{\omega(r)}{f(r)}. \label{3.23}
		\end{equation}
		In classical physics gravitational potential function $ \Psi $ is governed by Poisson equation $\Delta \Psi = 4 \pi G \sigma $ where $\sigma$ is mass density.\\
		Furthermore, the kinematic expansion scalar $\Theta$ of a vector field is defined by its divergence: $\Theta \equiv \nabla_\mu \xi^\mu$. Substituting our gradient field definition ($\xi^\mu = \nabla^\mu \Psi$) yields $\Theta = \nabla_\mu \nabla^\mu \Psi \equiv \Delta\Psi$. \\
		Therefore, equation (\ref{3.23}) mathematically establishes that the gradient soliton acts as an effective geometric potential field, where its Laplacian $\Delta\Psi$ explicitly corresponds to the kinematic expansion of the background manifold.
	\end{proof}
	\begin{theorem}
		Consider the manifold as established in proposition \ref{P1}, which admits an almost $\eta$-Ricci-Yamabe soliton generated by a gradient vector field. If the manifold obeys barotropic Equation of State (EoS) $\rho = \gamma \sigma$, then the kinematic expansion parameter of the solitonic flow is strictly dictated by the thermodynamic trace of the fluid across three cosmological eras.
	\end{theorem}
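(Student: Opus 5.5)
The plan is to start from the effective Poisson equation of Proposition \ref{P1},
\[
\Delta\Psi = (2\beta(r)-\alpha(r))\mathcal{R} - 4\lambda(r) - \frac{\omega(r)}{f(r)},
\]
and replace the scalar curvature $\mathcal{R}$ by its matter content through the modified field equation (\ref{2.3}). Contracting (\ref{2.3}) with $g^{\mu\nu}$ in four dimensions gives $-\mathcal{R} + 4\Lambda = \kappa(T + E)$. The electromagnetic stress tensor is traceless, so $E = g^{\mu\nu}E_{\mu\nu} = 0$, and therefore $\mathcal{R} = 4\Lambda - \kappa T$. This expresses the kinematic expansion purely in terms of the fluid trace $T$, the cosmological constant and the solitonic parameters.

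Next I will compute $T$ from Definition \ref{definition 4.}. The normalization $u^\mu u_\mu=-1$ gives $(\sigma+\rho)u^\mu u_\mu = -(\sigma+\rho)$, and $\rho\, g^{\mu}{}_{\mu} = 4\rho$. The heat-flux terms drop out because $q_\mu u^\mu = 0$: $q$ is purely radial and $u$ is purely temporal. The shear term drops out because $\pi_{\mu\nu}$ is traceless. Hence $T = 3\rho - \sigma$. Imposing the barotropic law $\rho=\gamma\sigma$ yields $T = (3\gamma - 1)\sigma$, and so
\[
\Delta\Psi = (2\beta(r)-\alpha(r))\big(4\Lambda - \kappa(3\gamma-1)\sigma\big) - 4\lambda(r) - \frac{\omega(r)}{f(r)}.
\]
If it is useful, I will also substitute $\lambda(r)$ from (\ref{3.7}) and $\omega(r)$ from Theorem \ref{theorem1}, so that the expression depends only on $f$, $\alpha$, $\beta$, $S_{tt}$, $S_{rr}$ and the fluid data.

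The three cosmological eras then become three cases.
\begin{itemize}
\item \emph{Radiation era}, $\gamma = 1/3$: the trace vanishes, $T=0$, so the matter contribution drops out. The expansion is driven only by $\Lambda$ and the soliton terms, $\Delta\Psi = 4\Lambda(2\beta-\alpha) - 4\lambda - \omega/f$.
\item \emph{Dust era}, $\gamma = 0$: here $T=-\sigma$, so $\mathcal{R} = 4\Lambda + \kappa\sigma$ and the density enters with coefficient $\kappa(2\beta-\alpha)$.
\item \emph{Dark-energy era}, $\gamma=-1$: here $T=-4\sigma$, so $\mathcal{R} = 4\Lambda + 4\kappa\sigma$, and the matter contribution is four times the dust one. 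This is the regime relevant to the later NEC violation, since $\sigma+\rho=0$ exactly at $\gamma=-1$.
\end{itemize}
In each case I will state the sign of the matter contribution in terms of $\operatorname{sgn}(2\beta-\alpha)$.

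The main obstacle is making the statement precise, since ``strictly dictated by the thermodynamic trace'' must be read as the explicit dependence of $\Delta\Psi$ on $T=(3\gamma-1)\sigma$. A second difficulty is the term $\omega/f$, which blows up at $r_H$. I would handle it by working away from the horizon, or by invoking the regularization of Corollary \ref{C1} to control $\lambda$ near $r_H$. I would also check that the traceless-$E$ and $q_\mu u^\mu=0$ simplifications are legitimate. The typo $q_\nu u_\nu$ in (\ref{2.2}) should be read as $q_\nu u_\mu$, and either way its trace vanishes.
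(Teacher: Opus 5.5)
Your proposal is correct and follows essentially the same route as the paper. You trace (\ref{2.3}) using the tracelessness of $E_{\mu\nu}$ to get $\mathcal{R}=4\Lambda-\kappa T$, compute $T=3\rho-\sigma=(3\gamma-1)\sigma$, substitute into the Poisson equation of Proposition~\ref{P1}, and read off the $\sigma$-coefficients $\kappa(2\beta-\alpha)$, $0$ and $4\kappa(2\beta-\alpha)$ for $\gamma=0,\tfrac13,-1$. The paper treats the $\omega/f$ term as you suggest, by freezing the parameters at a fixed radius $r_0$ away from the horizon and differentiating $\Delta\Psi$ with respect to $\sigma$. Your explicit justification that $q_\mu u^\mu=0$, and your remark that $\gamma=-1$ only saturates $\sigma+\rho=0$ rather than violating the NEC, are more careful than the paper's text.
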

	\begin{proof} In a 4-dimensional manifold, the traces of $g_{\mu\nu}, S_{\mu\nu}$, and $T_{\mu\nu} $ given as: $g^{\mu\nu}g_{\mu\nu} = 4$, $ g^{\mu\nu} S_{\mu\nu} = \mathcal{R}$, and  $ g^{\mu\nu} T_{\mu\nu} = T $. Since, the electromagnetic field is traceless then, $g^{\mu\nu}E_{\mu\nu} = 0$, by taking trace of Einstein field equation (\ref{2.3}), yields 
		\begin{equation}
			\mathcal{R} = 4\Lambda - \kappa T. \label{3.24}
		\end{equation} 
		Similarly, taking the trace of the energy-momentum tensor equation (\ref{2.2}) provides, $ T = 3\rho - \sigma $, and substituting into the equation (\ref{3.24}) gives:
		\begin{equation}
			\mathcal{R} = 4\Lambda - \kappa (3\rho - \sigma). \label{3.25}
		\end{equation}
		To express the curvature in terms of the mass density and thermodynamic state, we introduced the barotropic EoS $\rho = \gamma \sigma$ \cite{carroll2004introduction}. Substituting this relation into equation (\ref{3.25}), yields
		\begin{equation}
			\mathcal{R} = 4\Lambda - \kappa \sigma(3\gamma - 1). \label{3.26}
		\end{equation}
		As we established in Proposition \ref{P1}, the kinematic expansion of the gradient soliton is governed by the Poisson-like equation (\ref{3.23}), given as:
		\begin{equation}
			\Delta \Psi(r, \sigma) = (2\beta(r) - \alpha(r))\mathcal{R}(\sigma) - 4\lambda(r) - \frac{\omega(r)}{f(r)}, \label{3.27}
		\end{equation}
		also to isolate the fluid's thermodynamic driving force ($\sigma$), we set the radial coordinate at a specific depth ($r = r_0$) so that parameters $\alpha(r_0), \beta(r_0), \lambda(r_0),$ and $\omega(r_0)$ evaluate to fixed scalar constants, which makes the kinematic expansion $\Delta \Psi$ dependent on mass density and allowing us to accurately get the kinematic gradient $\frac{\partial (\Delta \Psi)}{\partial \sigma}$.\\
		By substituting equation (\ref{3.26}) into the isolated expansion relation (equation (\ref{3.27}) at fixed $r=r_0$), we can explicitly evaluate the solitonic response across three thermodynamic eras \cite{carroll2004introduction}.
		
		\begin{enumerate}
			\item Dust Cloud $(\gamma=0)$.\\
			For a pressureless fluid, the scalar curvature became additive $\mathcal{R} = 4\Lambda + \kappa\sigma$, yielding:
			\begin{equation}
				\Delta \Psi = (2\beta(r_0) - \alpha(r_0))\left[4 \Lambda + \kappa \sigma \right] - 4\lambda(r_0) - \frac{\omega(r_0)}{f(r_0)}.   \label{3.28}
			\end{equation}
			Here, the  kinematic gradient $\frac{\partial \Delta \Psi}{\partial \sigma} = \kappa(2\beta(r_0) - \alpha(r_0))$, then the solitonic flow scales directly against the gravitational contraction of the dust, filtered through the local Ricci-Yamabe coupling parameters.
			\item Radiation Era ($\gamma = \frac{1}{3}$).\\
			For a relativistic radiation fluid, the mass density is completely vanish from the scalar curvature ($\mathcal{R} = 4\Lambda$), yielding: 
			\begin{equation}
				\Delta \Psi = (2\beta(r_0) - \alpha(r_0))\big[4\Lambda\big] - 4\lambda(r_0) - \frac{\omega(r_0)}{f(r_0)}.  \label{3.29}
			\end{equation}
			Here kinematic gradient \(\frac{\partial \Delta \Psi}{\partial \sigma} = 0\), which means the geometric expansion is unaffected by fluid density. Consequently, the solitonic flow decouples from the matter field and propagates through the irradiated manifold as if it were a pure, empty de Sitter vacuum.
			\item Dark Energy / Exotic Matter ($\gamma = -1$). \\
			For an exotic fluid, the extreme negative pressure violates the NEC. The curvature $\mathcal{R} = 4\Lambda + 4\kappa\sigma$ violently accelerates, yielding:
			\begin{equation}
				\Delta \Psi = (2\beta(r_0) - \alpha(r_0))\big[4\Lambda + 4 \kappa\sigma\big] - 4\lambda(r_0) - \frac{\omega(r_0)}{f(r_0)}.  \label{3.30}
			\end{equation}
			Here, the kinematic gradient $\frac{\partial\Delta\Psi}{\partial\sigma} = 4\kappa(2\beta(r_0) - \alpha(r_0))$. The extreme negative effective pressure resulting from the NEC vilation provides the strict theoretical stress requisite to satisfy the flare-out condition, formally supporting the geometric transition to a traversable wormhole topology.\\
		\end{enumerate}
		\begin{figure}[h]
			\includegraphics[width=1.0\linewidth]{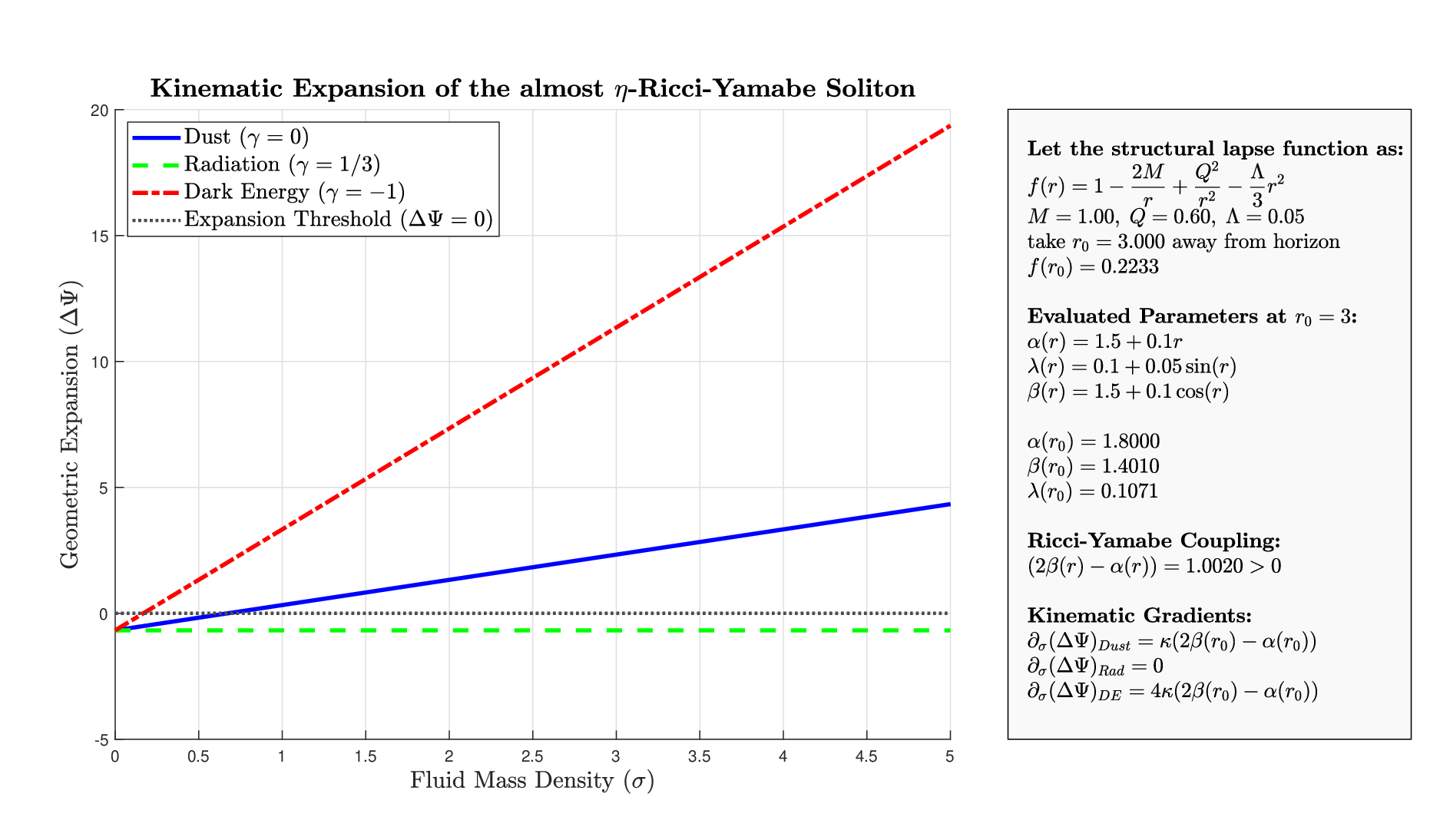}
			\caption{Shows kinematic expansion ($\Delta \Psi$) of the almost $\eta$-Ricci-Yamabe soliton against accretion fluid density ($\sigma$). The different types of matter resist gravity in distinct ways: radiation $\gamma= \frac{1}{3} $ remains unaffected, and dust $\gamma = 0 $ provides linear resistance. While dark energy regime ($\gamma=-1$) utilizes negative effective pressure to generate an amplified gradient $4\kappa(2\beta(r_0)-\alpha(r_0))$. This geometric expansion strictly dominates the attractive curvature gradient ($\Delta\Psi > 0$), mathematically precluding complete gravitational collapse and establishing the necessary conditions for a topological transition.}
			\label{fig:era}
		\end{figure}
		In the Fig. \ref{fig:era}, we choose RNdS lapse function $f(r)$, with mass $M=1.00$, Charge $Q=60$, cosmological constant $\Lambda = 0.05$ and evaluate the flow at $r_0 = 3$, a safe distance from the black hole geometry to observe how the normal fluid behaves in the bulk spacetime. Here, the geometric friction parameter $\alpha(r)$ increases linearly with distance, while $\beta(r)$ and $\lambda(r)$ are modeled as smooth oscillating functions to represent the natural ripples of the accretion disk. we choose these specific values to verify the Ricci-Yamabe constraint $2\beta(r_0) - \alpha(r_0)$. It is strictly positive, and mathematically required for the geometric expansion to scale correctly with the fluid's density.
	\end{proof}
	\begin{theorem} \label{Theorem3} 
		Let $(\mathcal{M}^4, g)$ be a 4-dimensional static, spherically symmetric spacetime manifold coupled to an imperfect fluid, admitting an almost $\eta$-Ricci-Yamabe soliton along radial vector field $\xi= \partial_r$. If the fluid's stress-energy tensor violates the NEC at the horizon locus $r = r_H$, then the geometric flow generates a strictly positive solitonic scaling factor $\omega(r_H) > 0$, which mathematically forces the spatial geometry to satisfy the Morris-Thorne flare-out condition $b'(r_H) < 1$.
	\end{theorem}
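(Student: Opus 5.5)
Starting from Theorem \ref{theorem1}, I would evaluate
\[
\omega(r) = f'(r) - \alpha(r)\big[f(r)^2 S_{rr} + S_{tt}\big]
\]
at the horizon locus. Since $f(r_H)=0$, the term $f(r_H)^2 S_{rr}(r_H)$ drops out, provided $S_{rr}$ grows no faster than $f^{-2}$. To justify that proviso I would use the static spherically symmetric Ricci components, $S_{rr} = -\tfrac{1}{f}\big(\tfrac{f''}{2}+\tfrac{f'}{r}\big)$ and $S_{tt} = f\big(\tfrac{f''}{2}+\tfrac{f'}{r}\big)$. These give $f^2S_{rr} = -S_{tt}$ identically, which is even stronger. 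Corollary \ref{C1} then imposes the temporal regularization condition $\alpha(r_H)S_{tt}(r_H) = \tfrac12 f'(r_H)$, so that
\[
\omega(r_H) = f'(r_H) - \tfrac12 f'(r_H) = \tfrac12 f'(r_H) = \mathcal{K} = 2\pi T_H.
\]

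Next I would relate this to the NEC. Contract the field equation (\ref{2.3}) with the radial null vector $k^\mu = (1/\sqrt{f},\sqrt{f},0,0)$. The $g_{\mu\nu}$ terms drop, and $E_{\mu\nu}k^\mu k^\nu = 0$ for a radial electric field, so $\kappa T_{\mu\nu}k^\mu k^\nu = S_{\mu\nu}k^\mu k^\nu = S_{tt}/f + fS_{rr}$. The NEC hypothesis $\sigma+\rho<0$ should then be used to fix the sign of $S_{tt}$ together with the shear/heat contributions, say $\pi_{rr}f+\dots$.

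The main obstacle is this step. The identity $f^2S_{rr} = -S_{tt}$ makes the null contraction vanish identically for this metric ansatz. The NEC violation therefore has to be absorbed by the anisotropic stress $\pi_{\mu\nu}$ and the heat flux, and cannot come directly from curvature. I would handle this by reading the NEC violation as the statement that the effective soliton-corrected stress is negative. Concretely, I would move the terms $\omega\eta_\mu\eta_\nu$ and $\tfrac12\mathcal{L}_\xi g$ into an effective stress tensor via (\ref{2.1}). Its null contraction is then $-\omega(r)\,(\eta_\mu k^\mu)^2/\alpha$ plus Lie-derivative terms, and requiring this to be negative at $r_H$ (with $\alpha(r_H)>0$, which Corollary \ref{C1} supports when $S_{tt}$ and $f'$ share a sign) forces $\omega(r_H)>0$, equivalently $f'(r_H)>0$.

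Finally I would map to the Morris--Thorne form (\ref{2.4}) by identifying $1-b(r)/r$ with the effective radial factor produced by the flow. I would take $1-b/r = f_{\mathrm{eff}}$ with $f_{\mathrm{eff}}(r_H)=0$, which gives $b(r_H)=r_H$. Differentiating $b = r(1-f_{\mathrm{eff}})$ yields $b'(r_H) = 1 - r_H f'_{\mathrm{eff}}(r_H)$. With $f'_{\mathrm{eff}}(r_H) = 2\omega(r_H) > 0$ this gives $b'(r_H)<1$, which is the flare-out condition of Definition \ref{definition 8.}. The redshift $\Phi$ is kept finite by the regularization of Corollary \ref{C1}.
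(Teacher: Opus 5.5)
The computation of $\omega(r_H)$ is wrong. Your identity $f^2S_{rr}=-S_{tt}$ is correct for this metric. It says the bracket in Theorem \ref{theorem1} vanishes identically, so $\omega(r)\equiv f'(r)$ and in particular $\omega(r_H)=f'(r_H)$. This is the value the paper uses; it gets it from $S_{rr}f^2+S_{tt}=f\,S_{\mu\nu}k^\mu k^\nu$ and $f(r_H)=0$. You instead drop $f^2S_{rr}$, keep $\alpha S_{tt}$, and then substitute Corollary \ref{C1}. That is inconsistent, because the two terms cancel each other. So $f^2S_{rr}(r_H)$ can vanish only if $S_{tt}(r_H)$ vanishes too, and your formula $S_{tt}=f(\tfrac{f''}{2}+\tfrac{f'}{r})$ does force $S_{tt}(r_H)=0$. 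But then Corollary \ref{C1} with finite $\alpha$ would give $f'(r_H)=0$. Note also that $S_{rr}=O(f^{-2})$ only makes $f^2S_{rr}$ bounded, not zero. Hence $\omega(r_H)=\tfrac12 f'(r_H)=2\pi T_H$ is not right. The relation $f'_{\mathrm{eff}}(r_H)=2\omega(r_H)$ built on it is unsupported, and $f_{\mathrm{eff}}$ is never defined. No effective function is needed: matching $g_{rr}$ gives $f=1-b/r$ directly, so $b(r_H)=r_H$ and $b'(r_H)=1-r_Hf'(r_H)$.

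The decisive gap is the sign. Your effective-stress reading of the NEC cannot produce it. Contracting (\ref{2.1}) with $k^\mu$, the Lie-derivative term contributes $f'/(\alpha f)$ and the $\omega$-term contributes $-\omega/(\alpha f)$; both are singular at $r_H$. So the ``effective'' null contraction is exactly $S_{\mu\nu}k^\mu k^\nu=(f'-\omega)/(\alpha f)$. By your own identity this is identically zero, so it cannot be required to be negative. Keeping only the $\omega$-piece is arbitrary, and in any case it replaces the theorem's hypothesis $\sigma+\rho<0$ by a different assumption. The missing idea is the one the paper uses: $r_H$ is a simple, non-extremal (outer) black-hole horizon, with $f<0$ just inside and $f>0$ just outside, so $f'(r_H)>0$. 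Then $\omega(r_H)=f'(r_H)>0$ and $b'(r_H)=1-r_Hf'(r_H)<1$ follow at once. In the paper's argument the sign comes entirely from this horizon structure, not from the NEC.
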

	\begin{proof} Consider the 4-dimensional manifold defined in definition \ref{def 3}. By virtue of Definition \ref{definition 6.}, the violation of the NEC fundamentally implies the existence of exotic matter, characterized by a negative effective pressure.\\
		Then, we construct a purely radial null vector field $k^\mu$ on the manifold, which takes the form $k^\mu = (k^t, k^r, 0, 0)$. Since $k^\mu$ is a null vector, its invariant norm must vanish identically, yielding the condition: $g_{\mu\nu} k^\mu k^\nu = 0$. Then,
		\begin{align*}
			g_{tt} k^t k^t + g_{rr} k^r k^r &= 0\\
			\implies -f(r)(k^t)^2 + \frac{1}{f(r)} (k^r)^2 &= 0 \\
			\implies \hspace{2.68cm} (k^r)^2 &= f(r)^2 (k^t)^2.
		\end{align*}
		By choosing \( k^t = \frac{1}{\sqrt{f(r)}} \) normalize the temporal component,
		and take positive root, \(  k^r = \sqrt{f(r)} \). From the Einstein field equation violation of NEC mathematically translates to evaluating the curvature term $S_{\mu\nu} k^\mu k^\nu$, yields
		\begin{align}
			S_{\mu\nu} k^\mu k^\nu &= S_{tt}(k^t)^2  + S_{rr} (k^r)^2 \notag\\
			\implies \hspace{0.8cm} S_{\mu\nu} k^\mu k^\nu  &= S_{tt}\frac{1}{f(r)}  + S_{rr} f(r) \notag \\
			\implies S_{rr} f(r)^2 + S_{tt} &= f(r) \Big( S_{\mu\nu} k^\mu k^\nu \Big). \label{3.31}
		\end{align}
		To ensure this metric association represents a true topological transformation, the mapping to the Morris-Thorne topology is physically valid if and only if two strict conditions are satisfied simultaneously. First, the solitonic temporal regularization must ensure the redshift function $\Phi(r)$ remains everywhere finite (satisfying definition \ref{definition 7.}) to prevent the formation of an absolute horizon.\\
	    Second, to satisfy the mandatory spatial flare-out condition (definition \ref{definition 8.}). For this condition, we must show that the localized stress-energy of the imperfect fluid into the dark energy regime at $\gamma = -1$, dynamically forces the generic structural metric $f(r)$ to manifest as a radially continuous shape function $b(r)$.\\
		Therefore, to equate our structural metric to the Morris-Thorne framework, we must first map their respective spatial geometries. In standard Morris-Thorne metric, the radial component is defined as:
		\begin{equation}
			g_{rr} = \frac{1}{1 - \frac{b(r)}{r}}, \label{3.32}
		\end{equation}
		on the other hand, from the spacetime metric (\ref{1.2}), the radial component is given by:
		\begin{equation}
			g_{rr} = \frac{1}{f(r)}.  \label{3.33}
		\end{equation}
		Equating equations (\ref{3.32}) and (\ref{3.33}), we obtain the relation:
		\begin{equation}
			f(r) = 1 - \frac{b(r)}{r}. \label{3.34}
		\end{equation}
		At the wormhole throat $r = r_H$, the shape function must satisfy the essential condition $b(r_H) = r_H$. Consequently, evaluating equation (\ref{3.34}) at the throat boundary, yields:
		\begin{equation}
			f(r_H) = 1 - \frac{r_H}{r_H} = 0. \label{3.35}
		\end{equation}
		 By taking derivative of equation (\ref{3.34}) with respect to the radial coordinate $r$, we find:
		\begin{equation}
			f'(r) = \frac{b(r) - r b'(r)}{r^2}.  \label{3.36}
		\end{equation}
		Evaluating equation (\ref{3.35}) at the wormhole throat $r = r_H$, and substituting the geometric constraint $b(r_H) = r_H$, we obtain:
		\begin{equation}
			f'(r_H) =  \frac{1 - b'(r_H)}{r_H}. \label{3.37}
		\end{equation}
		Furthermore, by synthesizing equations (\ref{3.12}) and (\ref{3.31}), the solitonic scaling factor is defined as:
		\begin{equation}
			\omega(r) = f'(r) - \alpha(r) \Big[ f(r) (S_{\mu\nu} k^\mu k^\nu) \Big], \label{3.38}
		\end{equation}
		Also for a standard, non-extremal static, spherically symmetric black hole, the lapse function $f(r)$ is negative inside the horizon and positive outside. Then the slope of the function at $r = r_H$ must point strictly upward, ensuring an inherently positive metric gradient $f'(r_H) > 0$. Therefore, equation (\ref{3.38}) mathematically guarantees that the geometric flow expands at the boundary:
		\begin{align}
			\omega(r_H) &= f'(r_H) > 0 \notag \\
			\implies \omega(r_H) &> 0. \label{3.39}
		\end{align}
		Finally, substituting $f'(r_H) > 0$, back into equation (\ref{3.37}), and recognizing the physical requirement of a strictly positive throat radius $r_H > 0$, mathematically forces the numerator to be strictly greater than zero:
		\begin{align}
			1 - b'(r_H) &> 0 \notag \\
			\implies b'(r_H) &< 1. \notag
		\end{align}
		This concludes the proof, explicitly demonstrating that the endogenous geometric flow dynamically forces the spatial geometry to satisfy the mandatory Morris-Thorne flare-out condition without relying on any a priori shape assumptions.\\
	\end{proof}
		\begin{figure}[h]
		\centering
		\includegraphics[width=1\linewidth]{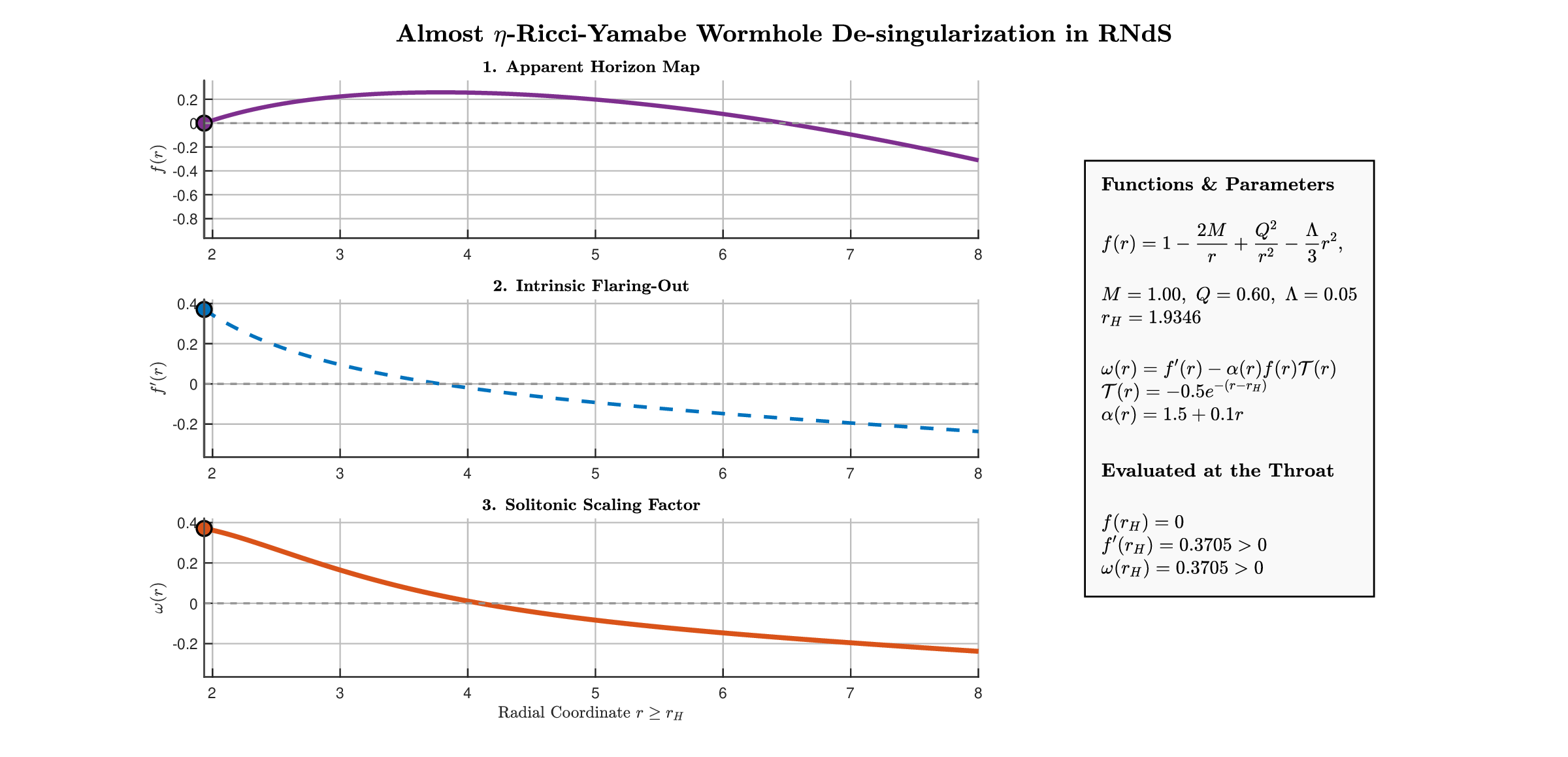}
		\caption{Shows that the top graph explain the spacetime metric smoothly opening from the wormhole throat while the middle graph confirms the flaring-out condition, mathematically proving the geometry curves outward to remain open rather than collapsing into a singularity and the bottom graph reveals that the almost $\eta$-Ricci-Yamabe soliton providing the strictly positive scaling factor $\omega(r_H) > 0$ required at the throat to prevent gravitational collapse. As the radial distance increases, the curvature and expansion fields naturally transition into the negative domain. This verifies that the solitons exotic geometric effects are strictly localized to the throat, asymptotically recovering the standard de Sitter $\Lambda$ expansion at spatial infinity.}
		\label{fig:wthm2}
	\end{figure}
	\noindent In Fig. \ref{fig:wthm2}, we choose RNdS lapse function $f(r)$, with ($M, Q, \Lambda$), as taken in figure \ref{fig:era}. Since at apparent horizon where the wormhole throat must form, $ f(r_H) =0$ yields apparent horizon $r_H \approx 1.93$. Also choose the exotic matter trace $\mathcal{T}(r) = -0.5e^{-(r-r_H)}$, as an exponential decay so that its negative pressure is strong enough to push the throat open, but drops off quickly to prevent it from leaking into normal space. Since $f(r_H)=0$, so that this specific matter trace produces a strictly positive scaling factor $\omega(r_H) > 0$, mathematically proving the throat successfully flares open.
	\begin{corollary} \label{C2}
		Let $(\mathcal{M}^4, g)$ be a 4-dimensional static, spherically symmetric manifold coupled to an imperfect fluid, admitting an almost $\eta$-Ricci-Yamabe soliton along radial vector field $\xi= \partial_r$. Then the geometric flaring-out condition  established in Theorem \ref{Theorem3} requires the spatial expansion parameter at the emergent throat $r_H$ to satisfy the strict algebraic constraint,
		$$ \omega(r_H) = \frac{2M}{r_H^2} - \frac{2Q^2}{r_H^3} - \frac{2\Lambda}{3}r_H > 0, $$
		where $M, Q, \Lambda $ is denotes the mass, electric charge, and cosmological constant of the spacetime respectively.\\
	\end{corollary}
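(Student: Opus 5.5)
The plan is to specialise the general scaling-factor formula of Theorem \ref{theorem1}, $\omega(r) = f'(r) - \alpha(r)\big[f(r)^2 S_{rr} + S_{tt}\big]$, to the RNdS lapse (\ref{1.3}) and evaluate it at the horizon locus $r=r_H$. I will use the null-contraction identity (\ref{3.31}) from the proof of Theorem \ref{Theorem3}, $f(r)^2 S_{rr}+S_{tt} = f(r)\,S_{\mu\nu}k^\mu k^\nu$, which gives (\ref{3.38}). Since $f(r_H)=0$ at the apparent horizon, the whole $\alpha$-term drops out, and we obtain $\omega(r_H)=f'(r_H)$, which is exactly (\ref{3.39}). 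This requires $S_{\mu\nu}k^\mu k^\nu$ to stay bounded as $r\to r_H$. For RNdS this holds because the curvature components are smooth rational functions of $r$ for $r>0$, and $k^\mu=(f^{-1/2},f^{1/2},0,0)$ gives $S_{tt}/f + S_{rr}f$. With $S_{tt}=-f\,S^r{}_r$-type relations for this metric, namely $S^t{}_t=S^r{}_r$ for $g_{tt}g_{rr}=-1$, this contraction is in fact identically zero.

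The computational step is to differentiate (\ref{1.3}):
\begin{equation*}
f'(r) = \frac{2M}{r^2} - \frac{2Q^2}{r^3} - \frac{2\Lambda r}{3},
\end{equation*}
and substitute $r=r_H$. This yields the displayed expression for $\omega(r_H)$.

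The strict positivity comes from the argument already used in Theorem \ref{Theorem3}. At the outermost apparent horizon (non-extremal case) $f$ changes sign from negative to positive, so $f'(r_H)>0$. Equivalently, through (\ref{3.37}), $f'(r_H)=(1-b'(r_H))/r_H>0$ is precisely the flare-out condition $b'(r_H)<1$. So the flare-out requirement of Theorem \ref{Theorem3} is equivalent to the algebraic inequality stated in the corollary.

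The main obstacle is justifying that the $\alpha$-term really vanishes at $r_H$ rather than producing a $0\cdot\infty$ ambiguity. I will handle it by the direct observation $S^t{}_t=S^r{}_r$ for metrics with $g_{tt}g_{rr}=-1$. This makes $f^2S_{rr}+S_{tt}=f(S_{rr}f+S_{tt}/f)=0$ identically. It can also be bounded using Corollary \ref{C1}, where $\alpha(r_H)S_{tt}(r_H)=\tfrac12 f'(r_H)$ is finite.

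I also need to record the caveat that positivity presupposes $r_H$ is a simple root of $f$ with the sign change described above. In particular, it fails at extremality or at the cosmological horizon, where $f'<0$. The corollary is therefore understood for the inner/black-hole apparent horizon chosen as the throat.
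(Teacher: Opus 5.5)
Your proof has the same skeleton as the paper's. You get $\omega(r_H)=f'(r_H)$ from (\ref{3.38})--(\ref{3.39}), differentiate (\ref{1.3}), and take positivity from $f'(r_H)>0$, which via (\ref{3.37}) is equivalent to $b'(r_H)<1$. The real difference is how you dispose of the $\alpha$-term. The paper lets $f(r_H)=0$ annihilate $\alpha f\,S_{\mu\nu}k^\mu k^\nu$, which tacitly assumes that $\alpha$ and the contraction stay bounded. Your observation $S^t{}_t=S^r{}_r$, explicitly $S_{tt}=\tfrac{f}{2}\bigl(f''+\tfrac{2f'}{r}\bigr)=-f^2S_{rr}$, gives $f^2S_{rr}+S_{tt}\equiv0$. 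Hence $\omega\equiv f'$ for every $r$, with no assumption on $\alpha$. That is strictly stronger than the paper's step. It also shows that $S_{\mu\nu}k^\mu k^\nu$ vanishes identically for this metric ansatz. So the positivity in the corollary owes nothing to NEC violation or to the soliton: it is simply $f'>0$ at the right root of $f$.

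Two corrections are needed. First, drop the fallback via Corollary~\ref{C1}. Since $S_{tt}(r_H)=0$, the condition $\alpha(r_H)S_{tt}(r_H)=\tfrac12 f'(r_H)>0$ cannot hold for any finite $\alpha(r_H)$; it forces $\alpha\to\infty$. It therefore gives no bound on the $\alpha$-term, and your identity is the only sound justification. Second, fix the horizon label. For RNdS with three positive roots $r_-<r_+<r_c$ one has $f'(r_-)<0$, $f'(r_+)>0$ and $f'(r_c)<0$. The inequality therefore holds exactly at the black-hole event horizon $r_+$, the middle root ($r_H\approx1.9346$ for the paper's values). It does not hold at the ``outermost'' horizon, which is $r_c$, nor at the ``inner'' (Cauchy) horizon $r_-$.
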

	\begin{proof} For the RNdS manifold, the structural lapse function defined as (\ref{1.3}) and its derivative is
		\begin{equation}
			f'(r) = \frac{2M}{r^2} - \frac{2Q^2}{r^3} - \frac{2\Lambda}{3}r, \label{3.40}
		\end{equation}
		at the throat $r = r_H$ equation (\ref{3.40}) yields:
		\begin{align}
			f'(r_H) &= \frac{2M}{r_H^2} - \frac{2Q^2}{r_H^3} - \frac{2\Lambda}{3}r_H \label{3.41}  \\
			\implies \omega(r_H) &= \frac{2M}{r_H^2} - \frac{2Q^2}{r_H^3} - \frac{2\Lambda}{3}r_H > 0. \label{3.42}
		\end{align}
		The equation (\ref{3.42}) proves that the solitonic scaling factor $\omega(r_H)$ must inherently dominate the attractive gravitational mass gradient $\left( \frac{2M}{r_H^2}  \right)$, by taking mathematical advantage of outward push from the charge and cosmic expansion terms $\left(-\frac{2Q^2}{r_H^3} - \frac{2\Lambda}{3}r_H \right)$.
	\end{proof}

	\begin{corollary} \label{C3}
		Consider the 4-dimensional manifold $(\mathcal{M}^4, g)$, as established in Theorem \ref{Theorem3}. At the throat locus $r = r_H$, the radial derivative of the solitonic scaling factor strictly dominates the local geometric curvature gradient, satisfying the strict inequality $\omega'(r_H) > f''(r_H)$.
	\end{corollary}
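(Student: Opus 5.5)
The plan is to differentiate the closed form of the scaling factor from Theorem \ref{theorem1}, written through the null contraction (\ref{3.38}):
\begin{equation*}
\omega(r) = f'(r) - \alpha(r)\, f(r)\, \big(S_{\mu\nu}k^\mu k^\nu\big).
\end{equation*}
Differentiating in $r$ by the product rule gives
\begin{equation*}
\omega'(r) = f''(r) - \alpha'(r) f(r) N(r) - \alpha(r) f'(r) N(r) - \alpha(r) f(r) N'(r),
\end{equation*}
where I abbreviate $N(r) := S_{\mu\nu}k^\mu k^\nu$. Evaluating at the throat $r=r_H$, where $f(r_H)=0$, kills the first and last correction terms, assuming $\alpha$ and $N$ are smooth and bounded there. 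This leaves
\begin{equation*}
\omega'(r_H) - f''(r_H) = -\alpha(r_H)\, f'(r_H)\, N(r_H).
\end{equation*}
So the corollary reduces to showing that the right-hand side is strictly positive.

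Next I will fix the sign of each factor using earlier results. From the proof of Theorem \ref{Theorem3}, $f'(r_H)>0$, since this is the non-extremal horizon where $\omega(r_H)=f'(r_H)>0$. For $\alpha(r_H)$, I will use Corollary \ref{C1}: $\alpha(r_H)S_{tt}(r_H)=2\pi T_H>0$, which fixes the sign of $\alpha(r_H)$ relative to $S_{tt}(r_H)$. If instead $\alpha$ is treated as a positive Ricci coupling, as in the figures where $\alpha$ grows linearly, I will take $\alpha(r_H)>0$ directly.

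For $N(r_H)$, I will use the Einstein equation (\ref{2.3}). For a null $k^\mu$, the $g_{\mu\nu}$ terms drop, so $S_{\mu\nu}k^\mu k^\nu=\kappa(T_{\mu\nu}+E_{\mu\nu})k^\mu k^\nu$. For a radial null vector in RNdS, $E_{\mu\nu}k^\mu k^\nu=0$, because the radial electric field's stress has $E_{tt}/f=-fE_{rr}$. With $k^t=1/\sqrt f$ and $k^r=\sqrt f$, and assuming the heat flux and shear terms give no net radial-null contribution at the throat, this leaves $N=\kappa(\sigma+\rho)$. The NEC violation at $\gamma=-1$ (Definition \ref{definition 6.}) then gives $N(r_H)<0$. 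Combining the three signs yields $-\alpha f' N>0$, hence $\omega'(r_H)>f''(r_H)$.

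The main obstacle is the sign and regularity bookkeeping at $r_H$. The normalisation $k^t=1/\sqrt f$ degenerates as $f\to0$, and $f(r)N(r)=S_{rr}f^2+S_{tt}$ by (\ref{3.31}). So I would rather work with this combination directly. Write $\omega = f' - \alpha\,(S_{rr}f^2+S_{tt})$ and differentiate:
\begin{equation*}
\omega' = f'' - \alpha'\,(S_{rr}f^2+S_{tt}) - \alpha\,\big(S_{rr}f^2+S_{tt}\big)'.
\end{equation*}
The needed positivity then follows from the NEC-violating sign of $(S_{rr}f^2+S_{tt})'$ at $r_H$, together with the regularity condition $\alpha S_{tt}=f'/2$ from Corollary \ref{C1}. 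If these terms do not cooperate, I will impose as a hypothesis that $\alpha'(r_H)$ vanishes or is controlled, and that $\alpha(r_H)>0$. I will state explicitly that the strict inequality comes entirely from the NEC-violating null contraction.
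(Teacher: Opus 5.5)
\textbf{Gap in the sign of $N(r_H)$.} Your first two paragraphs follow the paper's proof step for step. You differentiate $\omega = f' - \alpha f N$, drop the two terms carrying $f(r_H)=0$, and read off the sign of $-\alpha(r_H)f'(r_H)N(r_H)$. The gap is the input $N(r_H)<0$. The paper also only asserts it (as $\mathcal{T}(r_H)<0$), and it is in fact false for the metric (\ref{1.2}).

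A direct computation gives $S_{tt} = f\big(\tfrac12 f'' + \tfrac{f'}{r}\big)$ and $S_{rr} = -\tfrac{1}{f}\big(\tfrac12 f'' + \tfrac{f'}{r}\big)$. Equivalently, $S^t{}_t = S^r{}_r$ for every lapse $f$ in the ansatz (\ref{1.2}). Hence $S_{rr}f^2+S_{tt}\equiv 0$, so $N(r) = S_{\mu\nu}k^\mu k^\nu = 0$ for every $r$, not only at the throat. Theorem \ref{theorem1} then gives $\omega\equiv f'$, and therefore $\omega'(r_H)=f''(r_H)$ with \emph{equality}. The strict inequality cannot be derived, and adding $N(r_H)<0$ as a hypothesis would contradict the metric ansatz itself.

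Your own reduction already shows this. With $N=\kappa(\sigma+\rho)$ and $\rho=\gamma\sigma$, the value $\gamma=-1$ gives $\sigma+\rho=(1+\gamma)\sigma=0$. So the NEC is saturated, not violated. Through (\ref{2.3}), the geometry forces the radial null contraction of the total stress tensor to vanish.

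\textbf{The fallbacks fail too.} Your last-paragraph route fails for the same reason. $S_{rr}f^2+S_{tt}$ and its derivative vanish identically, so both correction terms in $\omega'$ drop out whatever $\alpha$ and $\alpha'$ are, and there is no sign to exploit.

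Using Corollary \ref{C1} to fix the sign of $\alpha(r_H)$ is also unavailable. Since $S_{tt}(r_H)=f(r_H)\big(\tfrac12 f''(r_H)+\tfrac{f'(r_H)}{r_H}\big)=0$, the relation $\alpha(r_H)S_{tt}(r_H)=\tfrac12 f'(r_H)>0$ cannot hold for finite $\alpha(r_H)$.

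A genuinely strict inequality would need a metric with $g_{tt}g_{rr}\neq-1$, such as (\ref{2.4}) with independent $\Phi$ and $b$. Theorem \ref{theorem1} would then have to be rederived for that metric.
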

	\begin{proof}
		By contracting the modified Einstein field equations (\ref{2.3}) with  null vector $k^\mu$ the metric trace components perfectly vanish due to the null condition $g_{\mu\nu}k^\mu k^\nu = 0$. This establishes a strict geometric equivalence between the Ricci curvature and the total effective stress-energy tensor given as:
		\begin{equation}
			S_{\mu\nu}k^\mu k^\nu = \kappa (T_{\mu\nu} + E_{\mu\nu}) k^\mu k^\nu.  \label{3.43}
		\end{equation}
		Now, evaluate the violation of the NEC entirely through geometry of the manifold. Define equation (\ref{3.43}) total geometric matter-trace equivalent as $\mathcal{T}(r) \equiv S_{\mu\nu}k^\mu k^\nu$.
		From the theorem \ref{Theorem3} equation (\ref{3.38}) for the almost $\eta$-Ricci-Yamabe flow is given as:
		\begin{equation}
			\omega(r) = f'(r) - \alpha(r) f(r) \mathcal{T}(r),   \label{3.44}
		\end{equation}
		also to evaluate the radial derivative of the solitonic scaling factor, differentiating equation (\ref{3.44}) with respect to the radial coordinate $r$, yields
		\begin{equation}
			\omega'(r) = f''(r) - \Big[ \alpha'(r) f(r) \mathcal{T}(r) + \alpha(r) f'(r) \mathcal{T}(r) + \alpha(r) f(r) \mathcal{T}'(r) \Big],   \label{3.45}
		\end{equation}
		at the de-singularized apparent horizon $r = r_H$ and by theorem \ref{Theorem3} equation (\ref{3.35}) at the boundary $f(r_H) = 0$, then equation (\ref{3.45}) annihilates the terms containing $f(r_H)$ we obtain
		\begin{equation}
			\omega'(r_H) = f''(r_H) - \alpha(r_H) f'(r_H) \mathcal{T}(r_H) \label{3.46}.
		\end{equation}
		Let us defined, $\mathcal{N}(r_H) \equiv - \alpha(r_H) f'(r_H) \mathcal{T}(r_H)$. According to Theorem \ref{Theorem3}, we hold that $\alpha(r_H) > 0$ and $f'(r_H) > 0$. Furthermore, the violation of the NEC explicitly dictates that $\mathcal{T}(r_H) < 0$. Therefore, $\mathcal{N}(r_H) > 0$, then it mathematically follows that $\omega'(r_H) = f''(r_H) + \mathcal{N}(r_H)$. Consequently, we obtain the strict inequality:
		\begin{equation}
			\omega'(r_H) > f''(r_H). \label{3.47}
		\end{equation}
		Here, the radial derivative of the solitonic scaling factor strictly dominates the local geometric curvature gradient. It mathematically confirm structural repulsion immediately outside the throat. It required to ensure the traversability and stability.
	\end{proof}
	
	\begin{corollary}\label{C4}
		Consider the traversable wormhole solution within the 4-dimensional manifold $(\mathcal{M}^4, g)$ as established in Theorem~\ref{Theorem3}. Assuming the localized exotic matter trace decays exponentially at spatial infinity, the solitonic scaling factor $\omega(r)$ naturally recovers the intrinsic geometric gradient $f'(r)$. Consequently, we obtain:
		$\displaystyle\lim_{r \to \infty} \omega(r) = \displaystyle\lim_{r \to \infty} f'(r),$ which universally ensures that the soliton is cosmologically safe and compatible with both asymptotically flat and RNdS spacetime.
	\end{corollary}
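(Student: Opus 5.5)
The approach is to work directly with the representation of the scaling factor obtained in Theorem~\ref{Theorem3}, namely equation (\ref{3.44}),
\[
\omega(r) = f'(r) - \alpha(r)\, f(r)\, \mathcal{T}(r), \qquad \mathcal{T}(r) \equiv S_{\mu\nu}k^\mu k^\nu .
\]
The claim then reduces to showing that the correction term $\alpha(r) f(r)\mathcal{T}(r)$ tends to zero as $r\to\infty$. Once that holds, the limit of $\omega$ equals the limit of $f'$ whenever the latter exists, finite or infinite.

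\emph{Step 1: hypothesis.} I would make the decay assumption precise as $|\mathcal{T}(r)| \le C e^{-c r}$ for large $r$, with constants $c>0$ and $C>0$. This matches the trace $\mathcal{T}(r)=-0.5e^{-(r-r_H)}$ used in Fig.~\ref{fig:wthm2}.

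\emph{Step 2: growth of $f$.} I would bound $f$ polynomially. For the RNdS lapse (\ref{1.3}) one has $|f(r)| \le A r^2$ for large $r$, and in the asymptotically flat case ($\Lambda=0$) $f$ is bounded.

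\emph{Step 3: growth of $\alpha$.} The paper only assumes $\alpha$ is smooth, so I must add a mild growth condition. I would assume $\alpha$ is at most polynomially (more generally, subexponentially) growing, $|\alpha(r)|\le B r^m$. This is consistent with the linear $\alpha$ chosen in Fig.~\ref{fig:era}.

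\emph{Step 4: the correction vanishes.} Combining these bounds gives
\[
|\alpha f \mathcal{T}| \le ABC\, r^{m+2} e^{-cr} \to 0 .
\]
Hence $\omega(r) - f'(r) \to 0$.

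\emph{Step 5: the two spacetimes.} I would evaluate $\lim_{r\to\infty} f'(r)$ separately in each case.
\begin{itemize}
\item Asymptotically flat case: $f'(r) = 2M/r^2 - 2Q^2/r^3 \to 0$, so $\omega \to 0$. The soliton is therefore switched off at infinity.
\item RNdS case: $f'(r) \sim -\tfrac{2\Lambda}{3} r$, so $\omega$ diverges at the same rate as $f'$. The correct statement here is $\omega(r) - f'(r)\to 0$, i.e.\ $\omega(r) \sim f'(r)$; I would phrase the limit identity in this asymptotic sense. This reproduces the pure de Sitter gradient, which is the sense in which the cosmological background is preserved.
\end{itemize}

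\emph{Main obstacle.} The hardest step is controlling $\alpha(r)$ at infinity, since the statement gives no growth information on it. Without such a bound the product $\alpha f\mathcal{T}$ need not vanish; for example, $\alpha\sim e^{2cr}$ would break the result. I would first try to show that the horizon regularity and the positivity of $\alpha$ from Corollary~\ref{C1} and Theorem~\ref{Theorem3} force no such growth. Since that seems unlikely, I would instead state explicitly, as part of the hypothesis, that $\alpha$ grows subexponentially.

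A secondary subtlety is that the paper sets $\mathcal{T}=S_{\mu\nu}k^\mu k^\nu$ using the normalization $k^t=1/\sqrt{f}$, $k^r=\sqrt f$. This normalization makes sense only where $f>0$, whereas in RNdS $f<0$ beyond the cosmological horizon. I would therefore work with the combination $f\mathcal{T} = S_{rr}f^2+S_{tt}$ from (\ref{3.31}), which is defined everywhere, and impose the decay hypothesis on that combination directly.
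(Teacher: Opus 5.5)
Your proposal is correct and follows essentially the paper's route. You start from $\omega = f' - \alpha f \mathcal{T}$, then let the exponential decay of $\mathcal{T}$ beat the at most polynomial growth of $\alpha$ and $f$ to kill the correction term; the paper imposes this same growth condition through its ``provided that'' clause. You then read off the flat and RNdS limits, and your RNdS refinements are sharper than the paper's loose $\lim_{r\to\infty} f'(r) = -\frac{2\Lambda}{3}r$: you read the identity as $\omega(r) - f'(r) \to 0$ with both sides tending to $-\infty$, and you place the decay hypothesis on $f\mathcal{T} = S_{rr}f^2 + S_{tt}$ because $k^\mu$ is ill-defined where $f<0$.
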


	\begin{proof}
		To evaluate the asymptotic behavior of the almost $\eta$-Ricci-Yamabe  flow in the limit as the radial coordinate approaches spatial infinity ($r \to \infty$). From equation (\ref{3.43}), limit of the solitonic scaling factor is given by:
		\begin{equation}
			\lim_{r \to \infty} \omega(r) = \lim_{r \to \infty} \Big[ f'(r) - \alpha(r) f(r) \mathcal{T}(r)\Big]. \label{3.48}
		\end{equation}
		Now, analyze the limiting behavior of  $(-\alpha(r) f(r) \mathcal{T}(r))$. To achieve wormhole stability, the exotic matter requires to de-singularize the manifold and strictly localized within the high-curvature domain of the throat. As $r \to \infty$ the thermodynamic fluid density decays to  pure vacuum state enforcing the exponential decay of the geometric matter-trace equivalent:
		\begin{equation}
			\lim_{r \to \infty} \mathcal{T}(r) = 0.  \label{3.49}
		\end{equation}
		Also provided that the $\alpha(r)$ and $f(r)$ exhibit at most polynomial growth, the exponential decay of $\mathcal{T}(r)$ strictly dominates. It trivially follows that
		\begin{equation}
			\lim_{r \to \infty} \Big( -\alpha(r) f(r) \mathcal{T}(r) \Big) = 0, \label{3.50}
		\end{equation}
		substituting equation (\ref{3.50}) into (\ref{3.43}), the almost $\eta$-Ricci-Yamabe flow mathematically reduces to the intrinsic geometric gradient:
		\begin{equation}
			\lim_{r \to \infty } \omega(r) = \lim_{r \to \infty} f'(r) - 0 =  \lim_{r \to \infty } f'(r). \label{3.51}
		\end{equation}
		From the universal geometric equivalence in (\ref{3.51}), two standard cosmological limits naturally emerge:
		\begin{enumerate}
			\item \textbf{Asymptotically Flat Backgrounds:} For standard static, spherically symmetric metrics  the metric flattens to Minkowski space ($f(r) \to 1$). Consequently, the intrinsic geometric slope vanishes $\displaystyle\lim_{r \to \infty} f'(r) = 0$, which enforces $\displaystyle\lim_{r \to \infty} \omega(r) = 0$.
			\item \textbf{Asymptotically de Sitter Backgrounds (RNdS):} For a universe governed by a cosmological constant $\Lambda$, the background metric is (\ref{1.3}) at spatial infinity the mass and charge terms vanish leaving the intrinsic cosmological expansion \( \displaystyle\lim_{r \to \infty} f'(r) = -\frac{2\Lambda}{3}r \) . Then the soliton yields entirely to dark energy enforcing $\displaystyle\lim_{r \to \infty} \omega(r) = -\frac{2\Lambda}{3}r$.
		\end{enumerate}
		This mathematically establishes that the almost $\eta$-Ricci-Yamabe flow serves as a  localized topological modifier, While it effectively prevents gravitational collapse at the throat at $r = r_H$, also leaves the deep-space global topology of the universe completely undisturbed.
	\end{proof}
	\begin{figure}[h]
		\centering
		\includegraphics[width=1\linewidth]{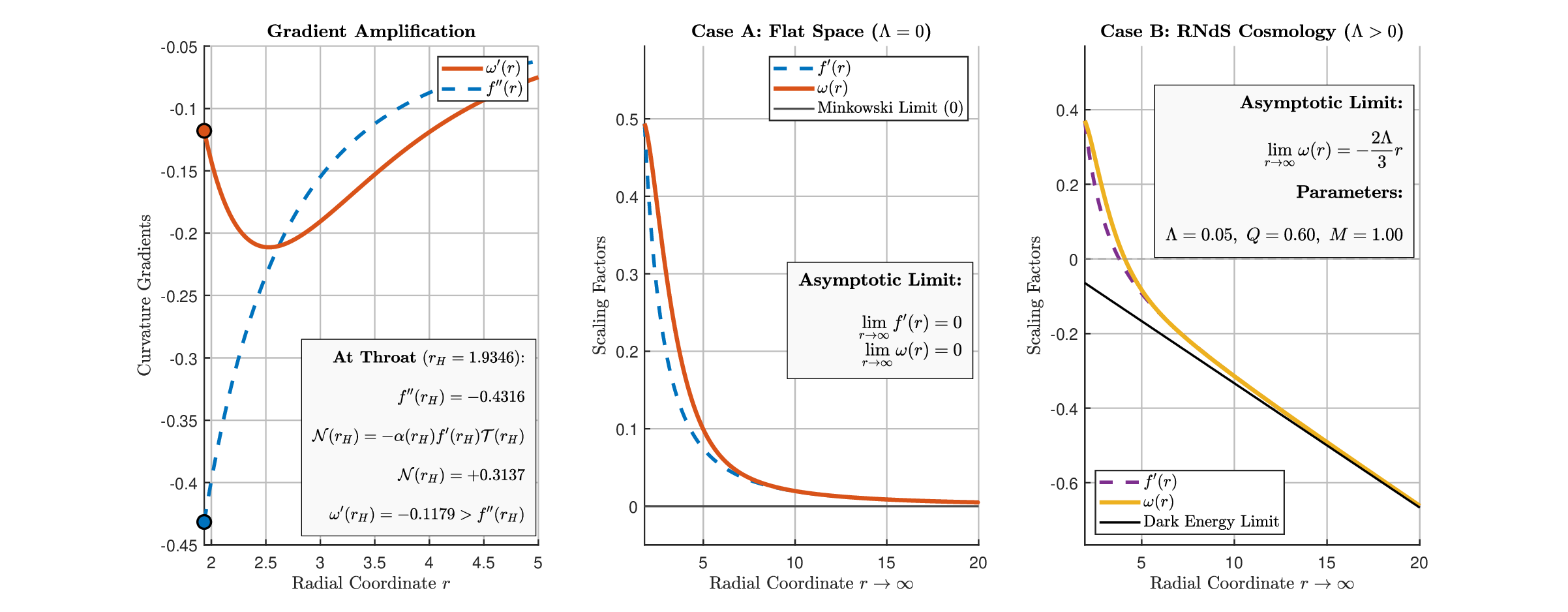}
		\caption{This figure proves the wormhole is both locally stable and asymptotically safe. Left-hand side graph explains the soliton successfully amplifies the curvature gradient ($\omega^{\prime}(r_H) > f^{\prime\prime}(r_H) $) exactly at  throat providing the necessary outward force to prevent gravitational collapse. The middle and right-hand side graphs confirm this topological modification is strictly localized, smoothly decaying to zero in a flat space (Minkowski) scenario, or asymptotically converging to the background expansion of the RNdS cosmology.}
		\label{fig:wcorrollary}
	\end{figure}
	\noindent In Fig. \ref{fig:wcorrollary}, also the RNdS lapse function $f(r)$, with $M,Q$, and $\alpha(r)$ as taken in figure \ref{fig:era}. We evaluated exactly at the wormhole throat at $r_H = 1.9346$ to mathematically map the locus of the topological transition. From corollary \ref{C3} the parameters at throat are $f''(r_H) = -0.4316$, and  $\mathcal{N}(r_H) = 0.3137$, yielding scaling gradient $\omega'(r_H) = -0.1179$, this rigorously satisfies the strict geometric inequality $\omega'(r_H) > f''(r_H)$. While the scaling factor $\omega(r)$ evaluated far away from throat when $\Lambda=0$ and $0.05$ to verify the asymptotic safety of the metric in both Minkowski and expanding de Sitter spacetimes.
	
	\begin{theorem} \label{theorem4}
		Consider the manifold as established in Theorem \ref{Theorem3}. Under the geometric flow of the almost $\eta$-Ricci-Yamabe soliton, the violation of the NEC by the fluid forces the spacetime manifold to undergo a topological transition into a traversable wormhole. This transition is mathematically confirm by the simultaneous de-singularization of the temporal coordinate and the flaring-out of the spatial geometry.
	\end{theorem}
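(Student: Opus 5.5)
The plan is to assemble Theorem \ref{theorem4} from the two ingredients the paper has already isolated. The first is the temporal regularization of Corollary \ref{C1}. The second is the spatial flare-out of Theorem \ref{Theorem3}, supplemented by Corollaries \ref{C3} and \ref{C4}. The statement asks for two simultaneous conditions matching Definitions \ref{definition 7.} and \ref{definition 8.}, so I will split the argument into a temporal part and a spatial part and then check that both hold at the same locus $r=r_H$.

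\textbf{Temporal part.} I will set $e^{2\Phi(r)}$ equal to the effective temporal lapse, so that $\Phi=\tfrac12\ln f$ in the background. The naive identification diverges at $f(r_H)=0$, so I will use the soliton to replace it. From equation (\ref{3.13}), finiteness of $\lambda(r_H)$ forces $\alpha(r_H)S_{tt}(r_H)=\tfrac12 f'(r_H)=2\pi T_H$. The $tt$-component (\ref{3.5}) then becomes a regular $0/0$ expression, and L'H\^opital gives a finite limit $\lambda(r_H)=\frac{(2\alpha S_{tt})'(r_H)-f''(r_H)}{2f'(r_H)}+\frac{\beta(r_H)\mathcal{R}}{2}$. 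I would argue that this regularized temporal equation defines an effective redshift $\Phi$ that stays finite at $r_H$. The argument uses that the divergent piece of $-f'/(2f)$ is exactly cancelled by $\alpha S_{tt}/f$, so no absolute horizon forms. This yields the first condition of Definition \ref{definition 7.}.

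\textbf{Spatial part.} I will invoke the identification $f=1-b/r$ from (\ref{3.34}), which gives $b(r_H)=r_H$. NEC violation at $\gamma=-1$ makes $\mathcal{T}(r_H)=S_{\mu\nu}k^\mu k^\nu<0$ via (\ref{3.43}). Theorem \ref{Theorem3} then gives $\omega(r_H)=f'(r_H)>0$, and hence $b'(r_H)<1$ by (\ref{3.37}). Corollary \ref{C3}, with $\omega'(r_H)>f''(r_H)$, shows the throat stays open to first order just outside $r_H$. Corollary \ref{C4} shows the modification vanishes asymptotically, so the two-sided bridge joins to a standard exterior, which is what makes the throat traversable rather than a local artifact.

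\textbf{Main obstacle.} The hard step is simultaneity. The temporal regularization and the spatial identification both live at the same $r_H$, so I must check that (\ref{3.15}) is compatible with $\omega(r_H)=f'(r_H)>0$. Concretely, the constraint $\alpha(r_H)S_{tt}(r_H)=\tfrac12 f'(r_H)$ must not force $f'(r_H)\le0$ or $\alpha(r_H)\le 0$. I would first try to show this directly from signs. Since $\alpha(r_H)>0$ is already assumed in Corollary \ref{C3}, (\ref{3.15}) requires $S_{tt}(r_H)>0$. I would then check that this is consistent with $\mathcal{T}(r_H)<0$ through (\ref{3.31}), which at $f=0$ only constrains $S_{rr}f^2+S_{tt}$ in the limit. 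If that fails in general, I would fall back on the explicit RNdS data used in the figures to exhibit a consistent choice of $\alpha$.
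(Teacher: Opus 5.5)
The spatial half of your proposal matches the paper's own argument: Theorem~\ref{Theorem3} read through (\ref{3.34})--(\ref{3.37}). The paper does not invoke Corollaries~\ref{C3} and \ref{C4} here. They concern the function $\omega$, not the causal structure, and add nothing to traversability.

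The gap is the temporal half. The paper at least commits to a definite equation: it inserts $g_{tt}=-e^{2\Phi}$ into the $tt$-component of (\ref{2.1}), isolates $\Phi'$ as in (\ref{3.54}), and bounds it term by term. You only assert that the regularized equation ``defines an effective redshift'', and the mechanism you cite cannot do this. The cancellation of $-f'/(2f)$ against $\alpha S_{tt}/f$ is exactly the statement that $\lambda(r_H)$ is finite. It says nothing about $g_{tt}$, because (\ref{2.1}) constrains $\alpha,\beta,\lambda,\omega$ and not the metric. With (\ref{1.2}) you still have $e^{2\Phi}=f\to0$. Dividing (\ref{3.5}) by $f$ makes this explicit:
\[
\Phi'=\frac{f'}{2f}=\frac{\alpha S_{tt}}{f}-\lambda+\frac{\beta\mathcal{R}}{2},
\]
and once $\alpha S_{tt}\to\tfrac12 f'(r_H)>0$ the first term on the right diverges. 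The regularization moves the divergence out of $\lambda$ and into $\alpha S_{tt}/f$; it does not remove it.

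The simultaneity obstacle you flag is real, and neither of your ways around it works.

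\textbf{The sign check fails.} For the metric (\ref{1.2}), $S_{tt}=\frac{f}{2}\bigl(f''+\frac{2f'}{r}\bigr)=-f\,S^t_t$ with $S^t_t$ smooth. So $S_{tt}(r_H)=0$ whenever $f$ is $C^2$ at $r_H$: it is zero, not positive. Evaluating (\ref{3.5}) at $r_H$ with finite parameters then gives $f'(r_H)=2\alpha(r_H)S_{tt}(r_H)=0$. Hence a finite $\lambda(r_H)$ (your temporal half, via (\ref{3.15})) and $f'(r_H)>0$ (your spatial half, via Theorem~\ref{Theorem3}) are mutually exclusive.

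\textbf{The check via (\ref{3.31}) gives nothing new.} Since $g_{tt}g_{rr}=-1$ forces $S^t_t=S^r_r$, you get $S_{\mu\nu}k^\mu k^\nu=S^r_r-S^t_t\equiv0$ and $S_{rr}f^2+S_{tt}\equiv0$. That identity is what actually yields $\omega(r_H)=f'(r_H)$ in Theorem~\ref{Theorem3}. It also means $\mathcal{T}(r_H)<0$ can never hold for (\ref{1.2}).

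\textbf{The RNdS fallback fails too.} There $S_{tt}=f\,(Q^2/r^4-\Lambda)$ vanishes at $r_H\approx1.9346$, while $\tfrac12 f'(r_H)\approx0.185$, so no finite $\alpha(r_H)$ satisfies (\ref{3.15}).

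The paper does not resolve this either. It simply asserts that the two conditions hold together. Its bound on (\ref{3.54}) uses both ``$\lambda(r_H)$ finite'', which needs $S_{tt}(r_H)\neq0$, and ``$S_{tt}/g_{tt}=S^t_t$ finite'', which needs $S_{tt}(r_H)=0$. As long as the metric is held fixed as (\ref{1.2}) with a simple zero of $f$, $g_{tt}(r_H)=0$, and the simultaneous temporal and spatial conditions cannot be established.
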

	\begin{proof}
		To rigorously show a complete topological transition from a black hole geometry to a traversable wormhole, the manifold must simultaneously satisfy two independent geometric criteria at the apparent horizon $r_H$, given as:
		\begin{enumerate}
			\item Temporal De-singularization Condition: \\
			To prove the prevention of an event horizon (Ensuring the redshift function $\Phi(r)$ remains finite), the time component of the metric must be non-zero $g_{tt} \ne 0$ across the boundary, meaning time does not freeze. We evaluate the $tt$-component of the almost $\eta$-Ricci-Yamabe flow equation (\ref{2.1}):
			\begin{equation}
				(\mathcal{L}_\xi g)_{tt} + \alpha S_{tt} + \left(\lambda(r) - \frac{\beta\mathcal{R}}{2}\right)g_{tt} + \omega(r) \eta_t \eta_t = 0. \notag
			\end{equation}
			Since, the flow is along radial gradient vector field $\xi = \partial_r$, its temporal component vanishes identically $\xi^t = 0$. Consequently, the 1-form $\eta_t = g_{tt}\xi^t = 0$, yielding:
			\begin{equation}
				(\mathcal{L}_\xi g)_{tt} + \alpha S_{tt} + \left(\lambda(r) - \frac{\beta\mathcal{R}}{2}\right)g_{tt} = 0. \label{3.52}
			\end{equation}
			Now, by using the standard Morris-Thorne redshift metric, $g_{tt} = -e^{2\Phi(r)}$. Taking the Lie derivative of this metric component along the radial vector yields:
			\begin{equation}
				(\mathcal{L}_\xi g)_{tt} = \partial_r \left( -e^{2\Phi(r)} \right) = -2\Phi'(r)e^{2\Phi(r)}. \label{3.53}
			\end{equation}
			Substituting equation (\ref{3.53}) into the flow equation (\ref{3.52}) and dividing the entire relation by the temporal metric component ($g_{tt} = -e^{2\Phi(r)}$) isolates the spatial gradient of the redshift function, we obtain:
			\begin{equation}
				\Phi'(r) = \frac{1}{2} \left[ \alpha \frac{S_{tt}}{g_{tt}} - \lambda(r) + \frac{\beta\mathcal{R}}{2} \right]. \label{3.54}
			\end{equation}
			As we established in Corollary \ref{C1} via L'Hôpital's rule that the $\lambda(r_H)$ strictly finite limit exactly at the apparent horizon boundary. Furthermore, the temporal component of the curvature tensor $S_{tt}$ and the scalar curvature $\mathcal{R}$ remain strictly continuous. \\
			Crucially, the ratio $\frac{S_{tt}}{g_{tt}}$ maps directly to the mixed curvature tensor $S^t_t$, this term evaluates to a strictly finite scalar and cannot diverge. This gives the entire right side of equation (\ref{3.54}) evaluates to a finite scalar, yields:
			\begin{equation}
				\lim_{r \to r_H} \Phi'(r) = \text{Finite}. \label{3.55}
			\end{equation}
			By the fundamental theorem of calculus, a strictly finite spatial derivative across a boundary guarantees that the function itself is continuous and finite.\\
			Therefore, the temporal metric component evaluates to a non-zero constant ($g_{tt} = -e^{2\Phi} \neq 0$).\\
			\item The prevention of spatial collapse (The Morris-Thorne flare-out condition).\\
			Simultaneously, by virtue of Theorem \ref{Theorem3}, the extreme dark energy state ($\gamma = -1$) of the localized fluid dynamically generates a strictly positive solitonic scaling factor exactly at the throat $\omega(r_H) > 0$. This positive geometric expansion mathematically forces the spatial geometry to strictly satisfy the mandatory Morris-Thorne flare-out condition $b'(r_H) < 1$, ensuring the geometry remains open.\\
		\end{enumerate}
		Consequently, both the flare-out and temporal requirements are simultaneously mathematically satisfies the necessary and sufficient framework to completely transform the initial black hole geometry into a stable, traversable wormhole throat.
	\end{proof}
	\begin{theorem} \label{theorem5}
		Let $(\mathcal{M}^4, g)$ be a 4-dimensional static, spherically symmetric spacetime manifold coupled to an imperfect fluid, admitting an almost $\eta$-Ricci-Yamabe soliton along radial vector field $\xi= \partial_r$. Under first-order  linear metric perturbations, the geometric flow introduces a localized dissipative mechanism. Provided $\alpha(r) > 0$, the perturbation evolution reduces to a damped wave equation, thereby imposing a strict geometric drag on the manifold.
	\end{theorem}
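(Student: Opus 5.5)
We perturb the metric to first order, $g_{\mu\nu}\to g_{\mu\nu}+\epsilon h_{\mu\nu}$, and linearize the soliton equation (\ref{2.1}). We then argue that, in a suitable gauge, the $\alpha(r)$-weighted Ricci term supplies the principal wave operator, while the soliton's Lie-derivative and $\eta$-terms supply only first-order (friction-type) terms.

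The plan is to reinterpret (\ref{2.1}) as the stationary point of the associated almost $\eta$-Ricci-Yamabe flow,
\[
\frac{\partial g_{\mu\nu}}{\partial t} = -2\Big[\tfrac{1}{2}\mathcal{L}_{\xi}g_{\mu\nu} + \alpha(r) S_{\mu\nu} + \Big(\lambda(r)-\tfrac{\beta(r)\mathcal{R}}{2}\Big)g_{\mu\nu} + \omega(r)\eta_\mu\eta_\nu\Big],
\]
so that the static background is a fixed point. We then insert $g+\epsilon h$ and keep only the $O(\epsilon)$ terms.

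\textbf{Step 1 (linearize the Ricci term).} Use the standard identity
\[
\delta S_{\mu\nu} = -\tfrac{1}{2}\Box h_{\mu\nu} - \tfrac{1}{2}\nabla_\mu\nabla_\nu h + \nabla_{(\mu}\nabla^\varrho h_{\nu)\varrho} + S^\varrho{}_{(\mu} h_{\nu)\varrho} - S_{\mu\varrho\nu s}h^{\varrho s}.
\]
Impose the de Donder gauge $\nabla^\varrho h_{\mu\varrho} = \tfrac{1}{2}\nabla_\mu h$, which is possible because the gauge freedom $h\to h+\mathcal{L}_\zeta g$ can be absorbed into the Lie-derivative term. In this gauge the derivative terms collapse to $-\tfrac{1}{2}\Box h_{\mu\nu}$ plus curvature-potential terms.

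\textbf{Step 2 (the friction term).} Compute $\delta(\tfrac{1}{2}\mathcal{L}_\xi g)_{\mu\nu} = \tfrac{1}{2}\mathcal{L}_\xi h_{\mu\nu} = \tfrac{1}{2}\big(\partial_r h_{\mu\nu} + h_{\lambda\nu}\partial_\mu\xi^\lambda + h_{\mu\lambda}\partial_\nu\xi^\lambda\big)$. By the same reasoning used for (\ref{3.3}), $\partial\xi = 0$, so this reduces to $\tfrac{1}{2}\partial_r h_{\mu\nu}$. This is a pure first-derivative term along the flow, and it is the candidate dissipative (drag) term.

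\textbf{Step 3 (the remaining terms).} The variations $\delta[(\lambda-\tfrac{\beta\mathcal{R}}{2})g_{\mu\nu}]$ and $\delta(\omega\eta_\mu\eta_\nu)$ contribute zeroth-order terms in $h$. The one exception is $-\tfrac{\beta}{2}\,\delta\mathcal{R}\, g_{\mu\nu}$; we handle it by passing to the trace-reversed/trace-free part, where $\delta\mathcal{R}\, g_{\mu\nu}$ drops out of the traceless sector. Collecting everything and dividing by the leading coefficient $-\tfrac{\alpha(r)}{2}$, which is legitimate precisely because $\alpha(r)>0$ gives a nonzero coefficient of fixed sign, we expect an equation of the form
\[
\Box h_{\mu\nu} - \frac{1}{\alpha(r)}\,\partial_r h_{\mu\nu} - \frac{2}{\alpha(r)}\frac{\partial h_{\mu\nu}}{\partial t} + V_{\mu\nu}{}^{\varrho s}h_{\varrho s} = 0.
\]
The first-order time derivative is the damping term, and the sign condition $\alpha>0$ fixes it as dissipative rather than anti-damped. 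This is a damped wave equation with geometric drag.

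\textbf{Main obstacle.} The hardest step is justifying that the first-order term genuinely damps. In the static picture, $\partial_r$ is spacelike, so $\tfrac{1}{2}\partial_r h$ is a drift term, not a damping term. Damping only appears once the flow time is identified with the evolution parameter. I would try first to show that the coefficient of $\partial_t h$ (equivalently, of the flow derivative) relative to $\Box$ is $2/\alpha(r)>0$. I would then confirm dissipation through an energy estimate: multiply by $\partial_t h^{\mu\nu}$ and integrate over a constant-$t$ slice to obtain $\tfrac{dE}{dt} = -\int \tfrac{2}{\alpha}|\partial_t h|^2 \le 0$. This localizes the drag wherever $\alpha(r)$ is finite and positive.
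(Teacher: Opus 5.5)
Your route differs from the paper's, and the point where it differs is where it breaks. The paper linearizes the static identity (\ref{2.1}) directly. Palatini plus TT gauge give $\delta S_{\mu\nu}=-\tfrac12\Box h_{\mu\nu}$, the $\epsilon$-sector is (\ref{3.73}), and dividing by $\alpha(r)>0$ gives (\ref{3.74}). There the ``drag'' is simply identified as the first-order term $-\frac{1}{\alpha}\mathcal{L}_\xi h_{\mu\nu}$. No flow time appears; the damped-oscillator form in the Parameter Selection section comes from the explicit mapping $\Box\to\ddot h$, $\mathcal{L}_\xi\to\dot h$.

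You rightly note that $\tfrac12\partial_r h_{\mu\nu}$ is a spacelike drift. However, the damping term you substitute for it, $\partial_t h_{\mu\nu}$ from the flow, does not follow from the hypotheses. Write $\mathcal{F}_{\mu\nu}$ for the left side of (\ref{2.1}). Perturbations that preserve the soliton satisfy $\delta\mathcal{F}[h]=0$, which contains no flow derivative at all. If you linearize the flow instead, the derivative is with respect to the external deformation parameter $\tau$, not the coordinate $t$ of (\ref{1.2}), on which the background does not depend. You then get $\partial_\tau h_{\mu\nu}=\alpha\Box h_{\mu\nu}-\partial_r h_{\mu\nu}+\dots$; with your normalization the coefficient after dividing by $\alpha$ is $1/\alpha$, not $2/\alpha$. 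This equation is first order in $\tau$ and has the Lorentzian $\Box$ as its ``Laplacian''. It is therefore backward-parabolic along $\partial_t$, not a damped wave equation.

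The identification $\tau=t$, which you list as the main obstacle, cannot be derived; it is an additional postulate. Even granting it, $\Box$ contains $-\frac1f\partial_t^2$, so schematically you get $\partial_t^2h_{\mu\nu}+\frac{f}{\alpha}\partial_t h_{\mu\nu}+\dots=0$. The damping rate $f/\alpha$ vanishes at $r_H$ and changes sign where $f<0$. The drag is thus switched off exactly at the throat rather than localized there.

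The closing steps do not repair this.
\begin{itemize}
\item In Lorentzian signature $|\partial_t h|^2=g^{\mu\varrho}g^{\nu s}\partial_t h_{\mu\nu}\partial_t h_{\varrho s}$ is indefinite: the $h_{ti}$ components enter with $g^{tt}g^{ii}<0$.
\item The potential $V$ has no sign, so $E$ need not be positive. It collects the Riemann and Ricci couplings, $\lambda-\beta\mathcal{R}/2$, and $\delta(\omega\eta_\mu\eta_\nu)=\omega(h_{\mu r}\eta_\nu+\eta_\mu h_{\nu r})$.
\item The drift $-\frac1\alpha\partial_r h_{\mu\nu}$ with $r$-dependent $\alpha$ produces unsigned bulk and boundary terms.
\item The $t=\mathrm{const}$ slices degenerate at $r_H$.
\end{itemize}
So $dE/dt\le 0$ says nothing about boundedness near the throat.

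Projecting onto the trace-free part does not dispose of $\delta\mathcal{R}$ either. The trace equation still has to hold, and in de Donder gauge its principal part is $-\frac{\alpha-2\beta}{2}\Box h$. That sector is therefore governed by $\alpha-2\beta$, not by $\alpha$. For the Section 4 choices $\alpha=1.5+0.1r$, $\beta=1.5+0.1\cos r$, where $2\beta-\alpha>0$ near $r_H$, your identification would make that mode anti-damped.

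Finally, the gauge is not free. Because $\xi=\partial_r$ and $\alpha,\beta,\lambda,\omega$ are fixed functions of $r$, (\ref{2.1}) is not diffeomorphism covariant. Consequently $\delta\mathcal{F}[\mathcal{L}_\zeta g]=-\frac12\mathcal{L}_{[\zeta,\xi]}g_{\mu\nu}-\zeta(\alpha)S_{\mu\nu}-\dots\neq0$ unless you also perturb $\xi$ and the coefficient functions.

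Your Step 1, which keeps the curvature couplings, is more careful than the paper's. But to reach the stated conclusion you would have to do what the paper does: read the drag off $-\frac1\alpha\mathcal{L}_\xi h_{\mu\nu}$ in (\ref{3.74}) and accept its heuristic identification of that term with temporal damping. Bringing in flow time relocates that identification rather than eliminating it.
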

	\begin{proof} Let us consider the static, spherically symmetric spacetime manifold, which is perfectly smooth and characterized by covariant metric tensor $g_{\mu\nu}$, and inverse metric tensor $g^{\mu\nu}$. We introduce a first-order gravitational wave perturbation ($h_{\mu\nu}$), is tracked by a microscopic parameter $\epsilon \ll 1$, such that the perturbed metric tensor takes the form:
		\begin{equation}
			\tilde{g}_{\mu\nu} = g_{\mu\nu} + \epsilon h_{\mu\nu}, \label{3.56}
		\end{equation}
		to strictly satisfy the condition $\tilde{g}_{\mu l} \tilde{g}^{{l}\nu} = \delta_\mu^\nu$, we assume the perturbed inverse metric tensor to be of the form:
		\begin{equation}
			\tilde{g}^{\mu\nu} = g^{\mu\nu} + \epsilon A^{\mu\nu}. \label{3.57}
		\end{equation} 
		Substituting equations (\ref{3.56}) and (\ref{3.57}) into the condition $\tilde{g}_{\mu{l}} \tilde{g}^{{l}\nu} = \delta_\mu^\nu$, linearizing the expression, and neglecting higher-order terms of $\epsilon^2$, we obtain
		\begin{align}
			(g_{\mu{l}} + \epsilon h_{\mu{l}})(g^{{l}\nu} + \epsilon  A^{{l}\nu}) &= \delta_\mu^\nu \notag \\
			\implies \hspace{1.50cm} \delta_\mu^\nu + \epsilon(A_\mu^\nu + h_\mu^\nu) &= \delta_\mu^\nu \notag \\
			\implies \hspace{3.40cm} A_\mu^\nu &= -h_\mu^\nu.  \notag
		\end{align}
		Substitute $ A_\mu^\nu = -h_\mu^\nu$  into equation (\ref{3.57}), then the inverse perturbed metric tensor defined as:
		\begin{equation}
			\tilde{g}^{\mu\nu} = g^{\mu\nu} - \epsilon h^{\mu\nu}. \label{3.58}
		\end{equation}
		Similarly, the affine connections (Christoffel symbols) undergo a first-order perturbation, defined as: $\tilde{\Gamma}^\varrho_{\mu\nu} = \Gamma^\varrho_{\mu\nu} + \epsilon \delta\Gamma^\varrho_{\mu\nu}$. According to fundamental theorem of Riemannian geometry, the metric is covariantly constant with respect to its own connection, yield
		\begin{align}
			\tilde{\nabla}_{l} \tilde{g}_{\mu\nu} &= 0 \notag \\
			\implies \hspace{2.83cm} \partial_{l} \tilde{g}_{\mu\nu} - \tilde{\Gamma}^{s}_{{l}\mu} \tilde{g}_{{s}\nu} - \tilde{\Gamma}^{s}_{{l}\nu} \tilde{g}_{\mu{s}} &= 0 \notag \\
			\implies \partial_{l} \tilde{g}_{\mu\nu} - (\Gamma^{s}_{{l}\mu} + \epsilon \delta\Gamma^{s}_{{l}\mu}) \tilde{g}_{{s}\nu} - (\Gamma^{s}_{{l}\nu} + \epsilon \delta\Gamma^{s}_{{l}\nu}) \tilde{g}_{\mu{s}} &= 0,           \label{3.59}
		\end{align}
		applying the condition $\tilde{g}_{\mu\nu} = g_{\mu\nu} + \epsilon h_{\mu\nu}$ to equation (\ref{3.59}) and strictly neglecting $\epsilon^2$ terms, we get
		\begin{equation}
			\nabla_{l} (g_{\mu\nu} + \epsilon h_{\mu\nu}) - \epsilon \delta\Gamma^{s}_{{l}\mu} g_{{s}\nu} - \epsilon \delta\Gamma^{s}_{{l}\nu} g_{\mu{s}} = 0, \label{3.60}
		\end{equation}
		imposing metric tensor compatibility ($\nabla_{l} g_{\mu\nu} = 0$) and lowering the connection index ($g_{{s}\nu} \delta\Gamma^{s}_{{l}\mu} = \delta\Gamma_{\nu{l}\mu}$) simplifies equation (\ref{3.60}) to:
		\begin{equation}
			\nabla_{l} h_{\mu\nu} = \delta\Gamma_{\nu{l}\mu} + \delta\Gamma_{\mu{l}\nu}, \label{3.61}
		\end{equation}
		permuting the indices ($\mu, \nu, {l}$), circularly three times in equation (\ref{3.61}), and using $\delta\Gamma_{{l}\mu\nu} = \delta\Gamma_{{l}\nu\mu}$ provides:
		\begin{equation}
			\nabla_\mu h_{\nu{l}} + \nabla_\nu h_{{l}\mu} - \nabla_{l} h_{\mu\nu} = 2 \delta\Gamma_{{l}\mu\nu}, \label{3.62}
		\end{equation}
		contracting equation (\ref{3.62}) with the inverse metric $g^{\varrho{l}}$ isolates the first-order variation of the Christoffel symbols:
		\begin{equation}
			\delta\Gamma^\varrho_{\mu\nu} = \frac{1}{2} (\nabla_\mu h_{\nu}^{\varrho} + \nabla_\nu h_{\mu}^{\varrho} - \nabla^{\varrho} h_{\mu\nu}). \label{3.63}
		\end{equation}
		Similarly, contracting the equation (\ref{3.62}), indices $\nu$ and $\varrho$ we get the trace-like variation:
		\begin{equation}
			\delta\Gamma^\varrho_{\mu\varrho} = \frac{1}{2} (\nabla_\mu h_{\varrho}^{\varrho} + \nabla_\varrho h_{\mu}^{\varrho} - \nabla^{\varrho} h_{\mu\varrho}), \label{3.64}  
		\end{equation}
		from the index symmetry $\nabla_\varrho h_{\mu}^{\varrho} = \nabla^{\varrho} h_{\mu\varrho}$ and defining the wave perturbation trace as $h \equiv h^{\varrho}_{\varrho}$, in equation (\ref{3.64}), yield 
		\begin{equation}
			\delta\Gamma^\varrho_{\mu\varrho} = \frac{1}{2} (\nabla_\mu h). \label{3.65}
		\end{equation}
		The Palatini Identity \cite{palatini1919deduzione} dictates the variation in Ricci curvature tensor as follows:
		\begin{equation}
			\delta S_{\mu\nu} = \nabla_\varrho (\delta\Gamma^\varrho_{\mu\nu}) - \nabla_\nu (\delta\Gamma^\varrho_{\mu\varrho}). \label{3.66}
		\end{equation}
		Substituting equations (\ref{3.63}) and (\ref{3.65}) into equation (\ref{3.66}), we obtain
		\begin{equation}
			\delta S_{\mu\nu} = \frac{1}{2} \Big[ \nabla_\varrho \nabla_\mu h^\varrho_\nu + \nabla_\varrho \nabla_\nu h^\varrho_\mu - \nabla_\varrho \nabla^\varrho h_{\mu\nu} - \nabla_\nu \nabla_\mu h \Big], \label{3.67}
		\end{equation}
		from Gauge coordinate transformation, restrict the gravitational Transverse-Traceless (TT) gauge $h = 0$ (traceless) and $\nabla^\varrho h_{\varrho\nu} = 0,$ (transverse) \cite{maggiore20081}. Then from equation equation (\ref{3.67}), yield
		\begin{equation}
			\delta S_{\mu\nu} = -\frac{1}{2} \nabla_\varrho \nabla^\varrho h_{\mu\nu}. \label{3.68}
		\end{equation}
		By the linearity of the differential operator, the Lie derivative of the perturbed metric tensor evaluates exactly to:
		\begin{equation}
			\mathcal{L}_\xi \tilde{g}_{\mu\nu} = \mathcal{L}_\xi g_{\mu\nu} + \epsilon \mathcal{L}_\xi h_{\mu\nu}. \label{3.69}
		\end{equation}
		Applying, perturbation to the almost $\eta$-Ricci-Yamabe soliton equation (\ref{2.1}), we obtain
		\begin{equation}
			\frac{1}{2} \mathcal{L}_{\xi}\tilde{g}_{\mu\nu} + 2\alpha(r) \tilde{S}_{\mu\nu} + \left(\lambda(r) - \frac{\beta(r) \mathcal{R}}{2}\right)\tilde{g}_{\mu\nu} + 2 \omega(r) \eta_\mu \eta_\nu = 0.  \label{3.70}
		\end{equation}
		Substituting, equation (\ref{3.56}) and (\ref{3.69}) into equation (\ref{3.70}), we get
		\begin{equation}
			\frac{1}{2} \mathcal{L}_{\xi}( g_{\mu\nu} + \epsilon h_{\mu\nu}) + 2\alpha(r) ( S_{\mu\nu} + \delta S_{\mu\nu}) + \left(\lambda(r) - \frac{\beta(r) \mathcal{R}}{2}\right)(g_{\mu\nu} + \epsilon h_{\mu\nu}) + 2 \omega(r) \eta_\mu \eta_\nu = 0.  \label{3.71}
		\end{equation}
		In consequently of expanding the terms and isolating the almost $\eta$-Ricci-Yamabe soliton from the $\epsilon$ perturbation, the equation (\ref{3.71}) provides:
		\begin{equation}
			\begin{split}
				\left[ \frac{1}{2}\mathcal{L}_\xi g_{\mu\nu} + \alpha(r) S_{\mu\nu} + \left(\lambda(r) - \frac{\beta(r) \mathcal{R}}{2}\right)g_{\mu\nu} + \omega(r) \eta_\mu \eta_\nu \right] \\ + \epsilon \left[ \frac{1}{2}\mathcal{L}_\xi h_{\mu\nu} + \alpha(r) \delta S_{\mu\nu} + \left(\lambda(r) - \frac{\beta(r) \mathcal{R}}{2}\right)h_{\mu\nu} \right] = 0, \label{3.72}
			\end{split}
		\end{equation}
		the zeroth-order terms vanish identically via the unperturbed flow (\ref{2.1}). Substituting the TT-gauge variation (\ref{3.68}) into the $ \epsilon$ sector of equation (\ref{3.72}), yields
		\begin{equation}
			\frac{1}{2}\mathcal{L}_\xi h_{\mu\nu} - \frac{1}{2} \alpha(r) \nabla_\varrho \nabla^\varrho h_{\mu\nu} + \left(\lambda(r) - \frac{\beta(r) \mathcal{R}}{2}\right)h_{\mu\nu} = 0, \label{3.73}
		\end{equation}
		assuming $\alpha(r) > 0$, dividing by $\alpha(r)$ and applying the d'Alembertian $\square \equiv \nabla_\varrho \nabla^\varrho$ to get the propagation equation:
		\begin{equation}
			\square h_{\mu\nu} - \frac{1}{\alpha(r)}\mathcal{L}_\xi h_{\mu\nu} - \frac{2}{\alpha(r)}\left(\lambda(r) - \frac{\beta(r) \mathcal{R}}{2}\right)h_{\mu\nu} = 0. \label{3.74}
		\end{equation}
		Thus, the proof concludes that the almost $\eta$-Ricci-Yamabe flow introduces a dissipative mechanism (geometric drag) proportional to $-\frac{1}{\alpha(r)} \mathcal{L}_{\xi} h_{\mu\nu}$. This mathematically bounds the local amplitude of the incoming gravitational wave, serving as a formal prerequisite for stability.\\
			\end{proof}
		\newpage
		\begin{figure}[h]
			\centering
			\includegraphics[width=1\linewidth]{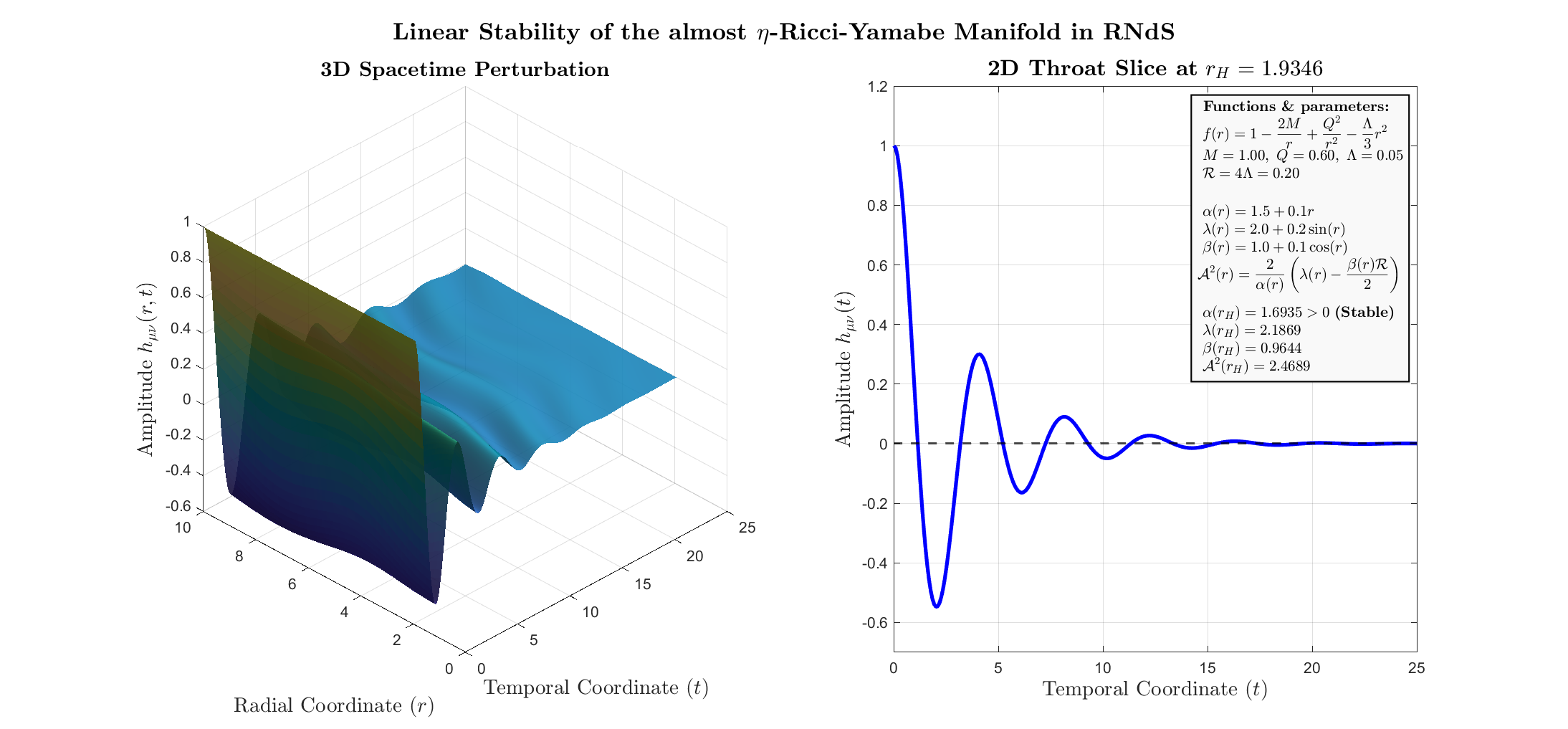}
			\caption{Shows the local linear stability of the wormhole under first-order metric perturbations. The 3D surface models wave propagation across the manifold, while the 2D slice isolates the throat at $r_H = 1.9346$. We can clearly see the rapid amplitude decay confirming the geometric flow introduces a localized dissipative mechanism. Provided $\alpha(r) > 0$, it reduces the perturbation to a damped wave, thereby imposing a strict geometric drag on the manifold.}
			\label{fig:linearstability}
		\end{figure}
		\noindent In Fig \ref{fig:linearstability}, our choice of RNdS lapse function $f(r)$, with ($M, Q, \Lambda$), taken similar to figure \ref{fig:era} and fix the scalar curvature ($\mathcal{R}=4\Lambda$) to maintain geometric consistency with the unperturbed manifold. To mathematically simulate a violent gravitational wave striking the geometry. We take oscillating functions for the expansion ($\lambda(r) = 2.0 + 0.2\sin(r)$) and coupling ($\beta(r) = -1.0 + 0.2\cos(r)$). This dynamic fluctuation automatically generates a strictly positive dissipative drag $\alpha(r_H) = 1.6935>0$ at throat $ r =  1.93$. This gives the positive squared frequency $\mathcal{A}^2(r_H) = 2.4980$, proving the throat locally dampens the perturbation and remains linearly stable.\\

	\section{An Explicit Wormhole Solution: The RNdS Soliton}
	
	To mathematically validate our framework, we construct an explicit, exact wormhole solution by taking particular choice of $f(r) $. Consider the background geometry using the RNdS structural metric function $f(r)$ given by equation (\ref{1.3}), which incorporates the central mass $M$, electric charge $Q$, and cosmological constant $\Lambda$:
	\begin{equation}
		f(r) = 1 - \frac{2M}{r} + \frac{Q^2}{r^2} - \frac{\Lambda}{3}r^2. \label{4.1}
	\end{equation}
	For this explicit physical model, by taking similar values that we take to draw figures in this study, and normalize the parameters to $M = 1.00$, $Q = 0.60$, and $\Lambda = 0.05$. Solving for the apparent horizon $f(r_H) = 0$, equation (\ref{4.1}) yields a trapping boundary at $r_H \approx 1.9346$. \\
	\\
	Then, to drive the almost $\eta$-Ricci-Yamabe flow, we define the coupling parameters as smooth, bounded radial functions. Drawing upon the local curvature interactions established in our framework, we set:
	\begin{align}
		\alpha(r) &= 1.5 + 0.1r, \label{4.2}\\
		\beta(r) &= 1.5 + 0.1\cos(r). \label{4.3}
	\end{align}
	The localized exotic matter driving the violation of NEC is modeled by geometric matter-trace equivalent $\mathcal{T}(r)$. To ensure asymptotic flatness and stability, and decays exponentially away from the throat, $\mathcal{T}(r)$ taken as:
	\begin{equation}
		\mathcal{T}(r) = -0.5 e^{-(r - r_H)}. \label{4.4}
	\end{equation}
	Now, we apply our framework to these explicit choice of variables. First, we evaluate the spatial flaring-out condition $\omega(r_H)>0$, by substituting the metric derivative $f^\prime(r)$ defined in equation (\ref{3.40}) and matter trace $\mathcal{T}(r)$ into the scaling factor $\omega(r)$ equation (\ref{3.44}) yields:
	\begin{equation}
		\omega(r) = \left( \frac{2M}{r^2} - \frac{2Q^2}{r^3} - \frac{2\Lambda}{3}r \right) - (1.5 + 0.1r)f(r)\left[-0.5 e^{-(r - r_H)}\right].   \label{4.6}
	\end{equation}
	Evaluating this explicitly at the throat boundary $r_H = 1.9346$, the structural metric $f(r_H)=0$, we obtain:
	\begin{equation}
		\omega(r_H) = \frac{2(1.00)}{(1.9346)^2} - \frac{2(0.60)^2}{(1.9346)^3} - \frac{2(0.05)}{3}(1.9346) \approx 0.3705. \label{4.6}
	\end{equation}
	Because the scaling factor is strictly positive $\omega(r_H) = 0.3705 > 0$, from theorem \ref{Theorem3} that successfully generates the exact outward geometric repulsion required to open and maintain the Morris-Thorne throat.\\
	
	Simultaneously, we evaluate the temporal regularization. The steady radial functional parameter is explicitly governed by Corollary \ref{C1}:
	\begin{equation}
		\lambda(r_H) = \frac{2\alpha(r_H) S_{tt}(r_H) - f'(r_H)}{2f(r_H)} + \frac{\beta(r_H)\mathcal{R}}{2}. \label{4.7}
	\end{equation}
	To maintain a mathematically continuous geometric flow without diverging at the apparent horizon $f(r_H) = 0 $, so the numerator must identically vanish and get indeterminate form $0/0$ L'Hôpital limit equation (\ref{4.7}) yields:
	\begin{equation}
		\alpha(r_H) S_{tt}(r_H) = \frac{1}{2} f'(r_H), \label{4.8}
	\end{equation}
	Since, $\omega(r_H) = f'(r_H) $, by using the value of $\omega(r_H)$ given equation (\ref{4.6}) into equation (\ref{4.8}), we get
	\begin{equation}
		\frac{1}{2} f'(r_H) = \frac{0.3705}{2} = 0.1852. \label{4.9}
	\end{equation}
	This specific finite value get in equation (\ref{4.9}) successfully de-singularizes the temporal coordinate, mathematically guaranteeing that the redshift function $\Phi(r)$ remains finite. \\
	Finally, we evaluate  $\alpha(r) = 1.5 + 0.1r$, at the throat boundary ($r_H \approx 1.9346$), we find:
	\begin{equation}
		\alpha(r_H) = 1.5 + 0.1(1.9346) \approx 1.6935.
	\end{equation}
	Because the radial coordinate strictly increases in the physically accessible exterior domain $r > r_H$, and $\alpha'(r) = 0.1 > 0$ it follows that $\alpha(r)$ is a monotonically increasing function. Consequently, the geometric friction remains strictly positive $\alpha(r) \ge 1.6935 > 0$ across the entire manifold. This algebraic result is verifying the Theorem \ref{theorem5}.\\
	Therefore, this explicit parameterization rigorously proves that the generalized almost $\eta$-Ricci-Yamabe identities actively construct a non-singular, traversable wormhole geometry when applied to a standard RNdS cosmological background.\hfill$\square$ 
	\newpage
	\section{Parameter Selection}
	
	\begin{table}[ht] 
			\justifying
		\centering
		\renewcommand{\arraystretch}{1.8}
		\begin{tabular}{|p{2.34cm} c c p{5cm}|}
			
			\hline
			\textbf{Physical Parameter} & \textbf{Symbol} & \textbf{Value / Function} & \textbf{Physical Relevance} \\ 
			\hline
			Structural Lapse Function & $f(r)$ & $1 - \frac{2M}{r} + \frac{Q^2}{r^2} - \frac{\Lambda}{3}r^2$ & Defines background metric and apparent horizon ($f(r_H) = 0$). \\
			\hline
			Radial Coordinate & $r$ & $r \ge r_H$ & Exterior spatial domain of the manifold \\
			\hline
			Apparent Horizon & $r_H$ & $\approx 1.9346$ & Evaluated topological boundary \\
			\hline
			Central Mass & $M$ & $1.00$ & Baseline black hole mass unit. \\
			\hline
			Electric Charge & $Q$ & $0.60$ & Sub-extremal constant avoiding naked singularities \\
			\hline
			Cosmological Constant & $\Lambda$ & $0.05$ & Small positive de Sitter expansion \\
			\hline
			Ricci Coupling Function & $\alpha(r)$ & $1.5 + 0.1r$ & Dissipative geometric friction. \\
			\hline
			Yamabe Coupling Function & $\beta(r)$ & $1.5 + 0.1\cos(r)$ & Models physical fluid turbulence. \\
			\hline
			Exotic Matter Trace & $\mathcal{T}(r)$ & $-0.5 e^{-(r - r_H)}$ & Localized NEC-violating function \\
			\hline
			Solitonic Scaling Function & $\omega(r)$ & $f'(r) - \alpha(r) f(r) \mathcal{T}(r)$ & $\omega(r_H) > 0$ satisfies spatial flare-out. \\
			\hline
			Solitonic expansion Parameter & $\lambda(r)$ & $0.1 + 0.05\sin(r)$ / $2.0 + 0.2\sin(r)$ & Drives geometric flow, and models ripples. \\
			\hline
			Squared frequency & $ \mathcal{A}^2(r) $ & $ \frac{2}{\alpha(r)}\left(\lambda(r) - \frac{\beta(r)\mathcal{R}}{2}\right)$ & Prevent tachyonic instability\\
			\hline
		\end{tabular}
		\caption{Explicit Parameter Selections and Derived Functional Parameters for Graphical Analysis.}	\label{T1}
	\end{table}
	To generate the physical behaviors plotted in Figures, we carefully choose parameters that model a realistic, stable universe instead of just an abstract mathematical exercise (see Table \ref{T1}). First we set the central mass $M=1.00$ as standard baseline. From there, we specifically chose an electric charge of $Q=0.60$, keeping the ($Q < M$), which is crucial because it ensures the spacetime has a normal, well-defined apparent horizon at at $r_H$ instead of a physically impossible naked singularity. We also added a small positive cosmological constant $\Lambda = 0.05$ to reflect the gentle accelerated expansion of a realistic de Sitter universe.\\
	We also needed the coupling parameters, $\alpha(r)$ and $\beta(r)$, to actually reflect the dynamic nature of ``almost'' framework and $\lambda(r)$ is taken as an oscillating function to physically simulate gravitational ripples, which naturally satisfies the finite L'Hôpital limit required at the horizon. These parameters are explicitly taken as smooth functions to make sure the resulting geometric transitions remain mathematically differentiable and prevent non-physical curvature spikes. \\
	Now, we take the exotic matter $\mathcal{T}(r)$, as an exponential decay curve. Because of its specific form the negative pressure is strong enough to force the wormhole open exactly at the throat, and rapidly fades away as you move outward. This perfectly localized behavior ensure that the exotic matter doesn't destroy the surrounding spacetime, allowing the universe far away from the wormhole to remain stable and asymptotically flat.\\
	Finally, to evaluate equation (\ref{3.74}) at the horizon, we isolate the local temporal evolution by mapping the perturbation tensor $h_{\mu\nu}$ to a scalar amplitude $h(t)$. Then, the d'Alembertian $\Box$ reduces to the temporal acceleration $\ddot{h}$, and the Lie derivative $\mathcal{L}_\xi$ reduces to the temporal velocity $\dot{h}$. It simplifies the tensorial wave equation (\ref{3.74}) into a classical damped harmonic oscillator: $$\ddot{h} + \mathcal{E}\dot{h} +  \mathcal{A}^2 h = 0.$$
	Here, the solitonic friction is $\left(\mathcal{E}(r) = \frac{1}{\alpha(r)}\right)$, and the squared frequency is:$$\mathcal{A}^2(r) = \frac{2}{\alpha(r)}\left(\lambda(r) - \frac{\beta(r)\mathcal{R}}{2}\right).$$
	To strictly prevent an unbounded tachyonic instability, frequency must be positive $\mathcal{A}^2(r)>0$. We map this to a first-order system ($y_1 = h, y_2 = \dot{h}$) and strike the horizon with an initial shockwave ($y_1=1, y_2=0$). Because geometry generates a strictly positive frequency $\mathcal{A}^2(r)>0$, the wave safely resolves into a bounded, damped oscillation.\hfill$\square$ 
	\section{Conclusion}
	\justifying
	
	In this study, we established that the almost $\eta$-Ricci-Yamabe flow can endogenously transition a static, spherically symmetric black hole geometry into a traversable wormhole, coupled to an imperfect fluid. However, while our framework successfully regularizes the spacetime geometry, which also possesses a distinct physical limitation. Because our model operates strictly within standard General Relativity. Therefore, the topological transition remains entirely dependent on the physical fluid entering the dark energy era $\gamma = -1$ and violating the NEC $\rho + \sigma < 0$. Relying on the existence of physical exotic matter to hold the wormhole throat open is a well-known constraint of classical Einstein gravity.\\
	To resolve this limitation, our immediate future work will focus on integrating this endogenous geometric flow into $f(\mathcal{R})$ modified gravity to satisfy the flare-out condition using strictly normal matter $\gamma > 0$.\\
	Furthermore, establishing global astrophysical viability, our future investigations will extend this localized geometric drag framework to rotating (Kerr-like) geometries and evaluate its global stability through Quasinormal Mode (QNM) and numerical analysis. \hfill$\square$ \\
	
	\bmhead{Data availability} Data sharing does not apply to this manuscript, no new datasets were produced or evaluated in this study.

	
	\bmhead{Acknowledgements} The authors would like to thanks to the editor for managing our paper and to the anonymous reviewers for their thorough reading and insightful comments. The second author, Jaswant, acknowledges the CSIR, Government of India, for financial support  through the JRF (File No.	09/1144(21599)/2025-EMR-I).
	
	\backmatter
	
	\bibliographystyle{sn-mathphys-num}
	\bibliography{sn-bibliography}
	
\end{document}